\begin{document}
\title[Simple Pricing Schemes for the Cloud]{Simple Pricing Schemes for the Cloud}  
\author{Ian A. Kash}
\affiliation{%
  \institution{Microsoft Research}
  \city{Cambridge}
  \country{UK}}
\author{Peter Key}
\affiliation{%
  \institution{Microsoft Research}
  \city{Cambridge}
  \country{UK}}
\author{Warut Suksompong}
\affiliation{%
  \institution{Stanford University}
  \city{Stanford}
  \country{USA}}

\begin{abstract}
The problem of pricing the cloud has attracted much recent attention due to the widespread use of cloud computing and cloud services. From a theoretical perspective, several mechanisms that provide strong efficiency or fairness guarantees and desirable incentive properties have been designed. However, these mechanisms often rely on a rigid model, with several parameters needing to be precisely known in order for the guarantees to hold. In this paper, we consider a stochastic model and show that it is possible to obtain good welfare and revenue guarantees with simple mechanisms that do not make use of the information on some of these parameters. In particular, we prove that a mechanism that sets the same price per time step for jobs of any length achieves at least $50\%$ of the welfare and revenue obtained by a mechanism that can set different prices for jobs of different lengths, and the ratio can be improved if we have more specific knowledge of some parameters. Similarly, a mechanism that sets the same price for all servers even though the servers may receive different kinds of jobs can provide a reasonable welfare and revenue approximation compared to a mechanism that is allowed to set different prices for different servers.
\end{abstract}

\maketitle

\section{Introduction}

With cloud computing generating billions of dollars per year and forming a significant portion of the revenue of large software companies \cite{Columbus16}, the problem of how to price cloud resources and services is of great importance. On the one hand, for a pricing scheme to be used, it is necessary that the scheme provide strong welfare and revenue guarantees. On the other hand, it is also often desirable that the scheme be simple. We combine the two objectives in this paper and show that simple pricing schemes perform almost as well as more complex ones with respect to welfare and revenue guarantees. In particular, consider the pricing scheme for virtual machines on Microsoft Azure shown in Figure \ref{fig:azure}. Once the user chooses the basic parameters such as region, type, and instance size, the price is calculated by simply multiplying an hourly base price by the number of virtual machines and number of hours desired. The question that we study can be phrased in this setting as follows: How much more welfare or revenue could be created if instead of this simple multiplication formula, a complex table specifying the price for each number of hours were to be used?  Our main result is that the former offers at worst a two approximation to the latter, both in terms of welfare and revenue.  Similarly, we demonstrate that setting a single price for a group of servers, even though the servers may receive different kinds of jobs, can provide a reasonable welfare and revenue approximation compared to setting different prices for different servers.

\begin{figure}[!ht]
\centering
\includegraphics[width=0.8\textwidth]{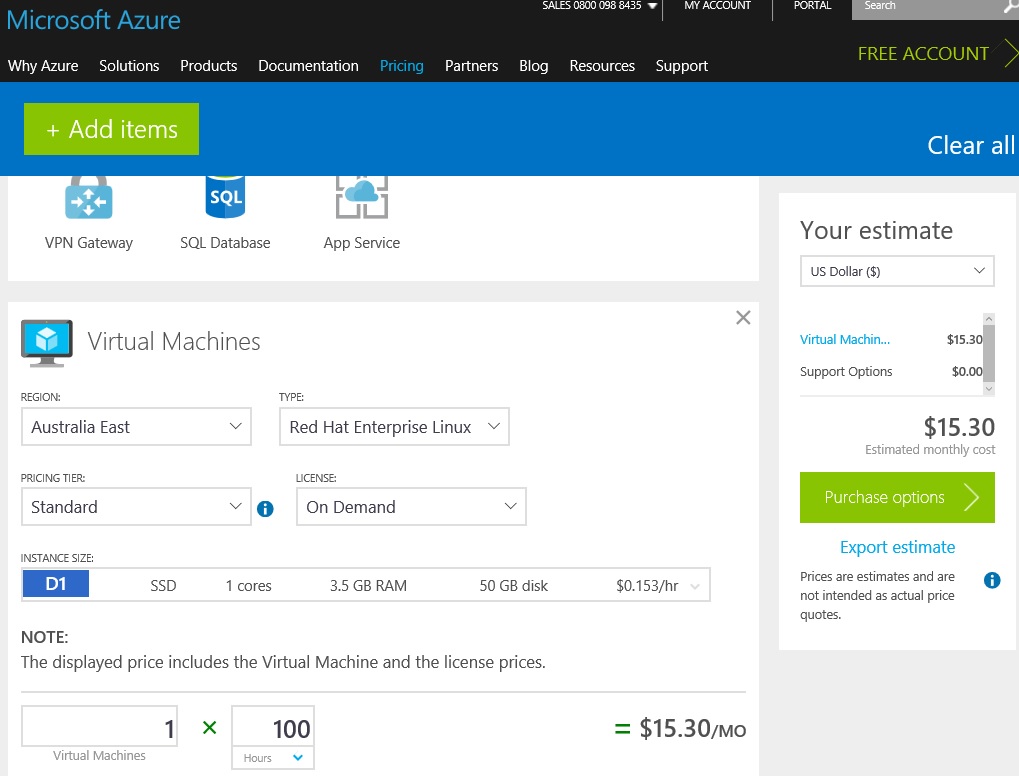}
\caption{Pricing scheme for virtual machines on  Microsoft Azure \cite{Azure16}.}
\label{fig:azure}
\end{figure}

In much of the prior work in this space, which focuses more explicitly on scheduling, prices depend in a complex way on a number of parameters (typically including job length, arrival time, deadline, and value) as well as the current state of the system \cite{AzarKaLu15,DehghaniKaKe16,JainMeNa11,JainMeNa12,LucierMeNa13}.  A weakness of such schemes is that they require these parameters to be known up front in order for the desirable properties of the mechanisms, such as their approximation ratios, to hold. The availability of such information is not always realistic in practice.  Even when it is in principle possible to provide this information, there is a cost to participants in both time and resources to figure it out. In this work, we show that good results are possible with no up front information.

For our initial results we assume that there is a single server, which receives jobs of various lengths whose value per time step is drawn from the same probability distribution regardless of length. We compare the welfare and revenue that can be obtained by setting a price per time step that is independent of the job length against the corresponding objective obtained by setting an individual price for each job length. When we are allowed the freedom of setting different prices for different job lengths, intuitively we want to set a higher price per time step for longer jobs as a premium for reserving the server for a longer period of time.\footnote{\label{footnote:amazon}Amazon recently started offering a product called ``defined duration spot instances'' where users can specify a duration in hourly increments up to six hours. Indeed, the price \emph{per hour} of this product increases as the number of hours increases. (See Figure~\ref{fig:amazon}.)} However, as we show, we do not lose more than $50\%$ of the welfare or revenue if we are only allowed to set one price. We would like to emphasize that this is a worst-case bound over a wide range of parameters, including the number of job lengths, the distribution over job lengths, and the distribution over job values. Indeed, as we show, we can obtain improved bounds if we know the value of some of these parameters. The price that we use in the single-price setting can be chosen from one of the prices used in the multi-price setting, meaning that we do not have to calculate a price from scratch. Moreover, all of our approximation guarantees hold generally for arbitrary prices, meaning that for any prices that we may set in a multi-price setting (i.e., not necessarily optimal ones), we can obtain an approximation of the welfare or revenue by setting one of those prices alone. Finally, we emphasize that these results put no restrictions on the form of the distribution; it can be discrete, continuous, or mixed.  The only substantive constraint is that jobs of all lengths share the same distribution of value per time step.  However, in an extension we show that a version of our results continues to hold even if this constraint is relaxed.

\begin{figure}[!ht]
\centering
\includegraphics[width=0.8\textwidth]{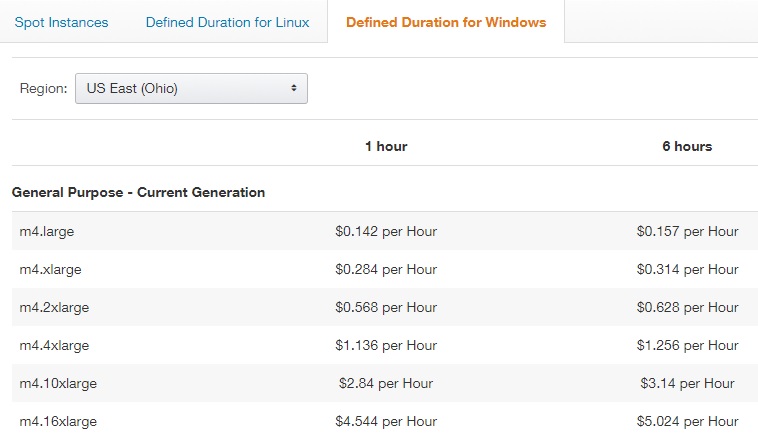}
\caption{Pricing scheme for defined duration spot instances on Amazon \cite{Amazon17}.}
\label{fig:amazon}
\end{figure}

We then generalize our results to a setting where there are multiple servers, each of which receives jobs of various lengths. The distribution over job lengths can be different for different servers. This is conceivable, for instance, if the servers are in various geographic locations or are utilized by various groups of users. We compare the welfare and revenue obtained by a simple pricing scheme that sets the same price for all servers against the corresponding objective achieved by a scheme that can set a different (single) price for each server. Roughly speaking, we show that as long as the parameters are not too extreme, e.g., the number of servers or the job lengths are not too large, then we do not lose too much of the welfare or revenue by setting a single price. Combining this with our initial results, we obtain an approximation of a very restricted pricing scheme where we must set the same price for all servers and all job lengths against one where we can set an individual price for each job length of each server. These results require an assumption that all servers have the same probability of not receiving a job at a time step. Using similar techniques, we also obtain approximation bounds when this assumption does not hold but there is only one job length across all servers.

\subsection{Related work}

Much recent work has focused on designing online scheduling mechanisms with good welfare guarantees and incentive properties. \citet{JainMeNa11} exhibited a truthful mechanism for batch jobs on cloud systems where jobs are allocated non-preemptively, and the same group of authors came up with mechanisms for deadline-sensitive jobs in large computing clusters \cite{JainMeNa12}. \citet{LucierMeNa13} also considered the problem of scheduling deadline-sensitive jobs; they circumvented known lower bounds by assuming that jobs could be delayed and still finish by their deadline. \citet{ZhangLiJi13} developed a framework for truthful online cloud auctions where users with heterogeneous demands can come and leave on the fly. More recently, \citet{AzarKaLu15} constructed a truthful mechanism that achieves a constant competitive ratio given that slackness is allowed, while \citet{DehghaniKaKe16} assumed a stochastic model and developed a truthful mechanism that approximates the expected maximum welfare up to a constant factor.
\citet{WangMaQin15} designed mechanisms for selling reserved instances where users are allowed to reserve resources of any length and from any time point in the future. Their mechanisms determine the acceptance and payment immediately when the job arrives, and achieve a competitive ratio that is optimal to within a constant factor with regard to welfare.

Other work in this space has dealt with comparing pricing mechanisms such as the on-demand market and the spot market \cite{AbhishekKaKe12,DierksSe16,HoyImLu16}, achieving fairness in job allocation \cite{FriedmanGhPs14}, and studying models of real-time pricing with budget constraints \cite{FriedmanRaSh15}. \citet{KashKe16} gave a survey of the current state of research in economics and computer science with respect to cloud pricing.

From a technical perspective, our work bears a resemblance to the work of \citet{DuttingFiKl16} on discriminatory and anonymous posted pricing and of \citet{DisserFeGa16} on hiring secretaries. In particular, D\"{u}tting et al. considered the problem of selling a single item to buyers who arrive sequentially with values drawn independently from identical distributions. They showed that by posting discriminatory prices, one can obtain at most $2-1/n$ times as much revenue as that obtained by posting the same anonymous price, where $n$ is the number of buyers. As is also the case in our work, their anonymous price can always be chosen from one of the discriminatory prices, but their bound is obtained via a relaxation of the discriminatory pricing problem, a different technique than what we use. Disser et al. provided a competitive online algorithm for a variant of the stochastic secretary problem, where applicants need to be hired over time. Upon the arrival of each applicant, the cost per time step of the applicant is revealed, and we have to decide on the duration of the employment, starting immediately. Once an applicant is accepted, we cannot terminate the contract until the duration of the job is over.

Our work falls into the broader area of the design and analysis of simple mechanisms, particularly posted price mechanisms. One of the motivations for studying simple mechanisms is that in practice, designers are often willing to partially give up optimality in return for simplicity. Mechanisms that simply post prices on goods have received significant attention since they reflect perhaps the most common way of selling goods in the real world, and moreover they leave no room for strategizing, making them easy for agents to participate in. A long line of work has investigated how well such mechanisms can approximate optimal mechanisms with respect to various objectives including welfare \cite{FeldmanGrLu15,CohenaddadEdFe16,EzraFeRo17}, revenue \cite{ChawlaHaMa10,BabaioffBlDu11,BlumrosenHo08}, and social costs \cite{CohenEdFi15}.
In Section~\ref{subsec:extension} we show that techniques from this literature can recover some of our results under relaxed assumptions.

\section{Preliminaries}

We consider a system with a number of servers and discrete time steps. Each job takes an integer number of time steps to complete and yields a value upon completion. The value \emph{per time step} of a job is drawn from a distribution which is independent of the length of the job. Let $F$ be the cumulative distribution function of this distribution and $f$ the probability density function with respect to a base measure $\mu$, and define $\ell(x)=xf(x)$.\footnote{For technical reasons, we will deviate slightly from the usual notion of cumulative distribution function. In particular, if $y$ is a random variable drawn from a distribution, then we define its cumulative distribution function $F(x)$ as $\text{Pr}[y<x]$ instead of the usual $\text{Pr}[y\leq x]$. This will only be important when we deal with discrete distributions.} We do not make any assumption on our distribution; in particular, it need not be continuous or discrete, which is why we allow flexibility in terms of the base measure.

When a job request is made for a job to be served by a server, there is a price $p$ per time step which may depend on the job length and/or the server. If the value per time step of the job is at least $p$, the server accepts and executes the job to completion. Otherwise, the server rejects the job. The objectives in our model are the steady-state welfare and revenue for each pricing scheme. In particular, we will be interested in the expected welfare and revenue per time step, given that the job values are drawn from a probability distribution. This can also be thought of as the average welfare and revenue per time step that result from a pricing scheme over a long period of time.

In Section \ref{sec:oneserver}, we assume that there is a single server. Each time step, either zero or one job appears. A job with length $a_i$ appears with probability $0<r_i\leq 1$, where $\sum_{i=1}^n r_i\leq 1$ and $n$ denotes the number of job lengths. We are allowed to set a price $p_i$ for jobs of length $a_i$. If a server accepts a job of length $a_i$, it is busy and cannot accept other jobs for $a_i$ time steps, including the current one. We compare the setting where we are forced to set the same price $p$ for all job lengths against the setting where we can set a different price $p_i$ for each job length $a_i$. Note that if we could set different prices for different job lengths, then to optimize welfare or revenue, intuitively we would set a higher price per time step for longer jobs as a premium for reserving the server for a longer period. Put differently, once we accept a longer job, we are stuck with it for a longer period, during which we miss the opportunity to accept other jobs. Consequently, we should set a higher standard for accepting longer jobs. (See also Footnote~\ref{footnote:amazon}.)

In Section \ref{sec:multservers}, we assume that there are multiple servers. Each time step, either zero or one job appears for each server $1\leq j\leq n$. For server $j$, a job with length $a_{ji}$ appears with probability $0<r_{ji}\leq 1$ for $1\leq i\leq n_j$, where $n_j$ denotes the number of job lengths for server $j$. We do not assume that the set of job lengths or the number of job lengths is identical across servers. On the other hand, we assume that the probability of no job appearing at a time step is the same for all servers, i.e., $\sum_{i=1}^{n_j}r_{ji}$ is constant for any $j$. In Subsection \ref{subsec:multserversoneprice}, we assume that we can set one price per server, and we compare the setting where we are forced to set the same price $p$ for all servers against that where we can set a different price $p_j$ for each server $j$. In Subsection \ref{subsec:multserversmultprice}, we assume that we can set a different price $p_{ji}$ for each server $j$ and each of its job lengths $a_{ji}$, and we compare that setting against the one where we are forced to set the same price $p$ for all servers and all job lengths.

\section{One Server}
\label{sec:oneserver}

In this section, we assume that there is a single server, which receives jobs of various lengths. After presenting an introductory example in Subsection \ref{subsec:oneserverexample}, we consider the general setting with an arbitrary number of job lengths in Subsection \ref{subsec:oneservermanylengths}. In this setting, we show a $50\%$ approximation for both welfare and revenue of setting one price for all job lengths compared to setting an individual price for each job length, for any realization of the parameters. Moreover, we show in Subsection \ref{subsec:oneservertwolengths} that our techniques provide a template for deriving tighter bounds if we have more specific information on the parameters. In particular, when there are two job lengths, we show for each setting of the parameters a tight approximation bound for welfare and revenue. Our approximation results hold for arbitrary (i.e., not necessarily optimal) pricing schemes, and the price we use in the single-price setting can be drawn from one of the prices in the multi-price setting. Finally, in Subsection~\ref{subsec:extension} we consider an extension that does not assume independence between the job length and the value per time step.

\subsection{Warm-Up: Uniform Distribution}
\label{subsec:oneserverexample}

As a warm-up example, assume that at any time step a job with length 1 or 2 appears with probability $50\%$ each. The value per time step of a job is drawn from the uniform distribution over $[0,1]$. Suppose that we set a price per time step $p_1$ for jobs of length 1 and $p_2$ for jobs of length 2.

Consider an arbitrary time step when the server is free. If the job drawn at that time step has length 1, then with probability $p_1$ it has value below $p_1$ and is rejected. In this case, the server passes one time step without a job. Otherwise, the job has value at least $p_1$ and is accepted. In this case, the expected welfare from executing the job is $\frac{1+p_1}{2}$. Similarly, if the job has length 2, then with probability $p_2$ it is rejected, and with probability $1-p_2$ it is accepted and yields expected welfare $2\cdot\frac{1+p_2}{2}=1+p_2$ over two time steps. Letting $c_w$ denote the expected welfare per time step assuming that the server is free at the current time step, we have
$$0=\frac12\left(-p_1c_w+(1-p_1)\left(\frac{1+p_1}{2}-c_w\right)\right)+\frac12\left(-p_2c_w+(1-p_2)\left(1+p_2-2c_w\right)\right).$$
The two terms on the right hand side correspond to jobs of length 1 and 2, which are drawn with probability $1/2$ each. In the case that a job of length 2 is drawn, with probability $p_2$ it is rejected and the server is idle for one time step, during which it would otherwise have produced expected welfare $c_w$. With the remaining probability $1-p_2$ the job is accepted, yielding expected welfare $1+p_2$ over two time steps, during which the server would otherwise have produced expected welfare $2c_w$. The derivation for the term corresponding to jobs of length 1 is similar. By equating the expected welfare with the variable denoting this quantity, we arrive at the equation above.

Solving for $c_w$, we get
$$c_w(p_1,p_2)=\frac{\frac{(1-p_1)(1+p_1)}{2}+(1-p_2)(1+p_2)}{3-p_2}.$$

To maximize $c_w(p_1,p_2)$ over all values of $p_1,p_2$, we should set $p_1=0$. (Indeed, to maximize welfare we should always accept jobs of length 1 since they do not interfere with future jobs.) Then the value of $p_2$ that maximizes $c_w(p_1,p_2)$ is $p_2=3-\sqrt{\frac{15}{2}}\approx 0.261$, yielding $c_w(p_1,p_2)=6-\sqrt{30}\approx 0.522$.

On the other hand, if we set the same price $p=p_1=p_2$ for jobs with different lengths, our welfare per time step becomes
$$c_w(p)=\frac{\frac{(1-p)(1+p)}{2}+(1-p)(1+p)}{3-p}=\frac{3(1-p)(1+p)}{2(3-p)}.$$
This is maximized at $p=3-2\sqrt{2}\approx 0.172$, yielding $c_w(p)=9-6\sqrt{2}\approx 0.515$. Moreover, if we use either of the prices in the optimal price combination for the two-price setting as the single price, we get $c_w(0)=0.5$ and $c_w\left(3-\sqrt{\frac{15}{2}}\right)\approx 0.510$.

Next, we repeat the same exercise for revenue. We can derive the equations in the same way, with the only difference being that the revenue from accepting a job at price $p$ is simply $p$. Letting $c_r$ denote the revenue per time step, we have
$$0=\frac12\left(-p_1c_r+(1-p_1)\left(p_1-c_r\right)\right)+\frac12\left(-p_2c_r+(1-p_2)\left(2p_2-2c_r\right)\right).$$
Solving for $c_r$, we get
$$c_r(p_1,p_2)=\frac{(1-p_1)p_1+2(1-p_2)p_2}{3-p_2}.$$

To maximize $c_r$ over all values of $p_1,p_2$, we should set $p_1=0.5$. (Indeed, to maximize revenue we should always set the monopoly price for jobs of length 1 since they do not interfere with future jobs.) Then the value of $p_2$ that maximizes $c_r(p_1,p_2)$ is $p_2=3-\sqrt{\frac{47}{8}}\approx 0.576$, yielding $c_r(p_1,p_2)=10-\sqrt{94}\approx 0.304$.

On the other hand, if we set the same price $p=p_1=p_2$ for jobs with different lengths, our revenue per time step becomes
$$c_r(p)=\frac{(1-p)p+2(1-p)p}{3-p}=\frac{3(1-p)p}{3-p}.$$
This is maximized at $p=3-\sqrt{6}\approx 0.551$, yielding $c_r(p)=15-6\sqrt{6}\approx 0.303$. Moreover, if we use either of the prices in the optimal price combination for the two-price setting as the single price, we get $c_r(0.5)=0.3$ and $c_r\left(3-\sqrt{\frac{47}{8}}\right)\approx 0.302$.

Observe that for both welfare and revenue, the maximum in the one-price setting is not far from that in the two-price setting. In addition, in both cases at least one of the two prices in the optimal price combination for the two-price setting, when used alone as a single price, performs almost as well as the maximum in the two-price setting. In the remainder of this section, we will show that this is not a coincidence, but rather a phenomenon that occurs for any set of job lengths, any probability distribution over job lengths, and any probability distribution over job values.

\subsection{General 50\% Approximation}
\label{subsec:oneservermanylengths}

In this subsection, we consider a general setting with an arbitrary number of job lengths. We show that even at this level of generality, it is always possible to obtain $50\%$ of the welfare and revenue of setting an individual price for each job length by setting just one price. Although the optimal price in the one-price setting might be different from any of the prices in the multi-price setting, we show that at least one of the prices in the latter setting can be used alone to achieve the $50\%$ guarantee. 

Assume that there are jobs of lengths $a_1\le a_2\le\dots\le a_n$ which appear at each time step with probability $r_1,r_2,\dots,r_n$, respectively. Suppose that we set a price per time step $p_i$ for jobs of length $a_i$. Recall that the value per time step of a job is drawn from a distribution with cumulative distribution function $F$ and probability density function $f$.

The following lemma gives the formulas for the expected welfare and revenue per time step. The derivation is straightforward and can be found in the Appendix \ref{app:prooflemmaformulas}.

\begin{lemma}
\label{LEM:FORMULAS}
Let $S=a_1r_1+\dots+a_nr_n$ and $R=r_1+\dots+r_n$, and let $c_w$ and $c_r$ denote the expected welfare and revenue per time step, respectively. We have
\begin{equation}
\label{eqn:cw}
c_w(p_1,p_2,\dots,p_n)=\frac{a_1r_1\int_{x\geq p_1}\ell d\mu+\dots+a_nr_n\int_{x\geq p_n}\ell d\mu}{S-((a_1-1)r_1F(p_1)+\dots+(a_n-1)r_nF(p_n))+(1-R)}
\end{equation}
and
\begin{equation}
c_r(p_1,p_2\dots,p_n)=\frac{a_1r_1(1-F(p_1))p_1+\dots+a_nr_n(1-F(p_n))p_n}{S-((a_1-1)r_1F(p_1)+\dots+(a_n-1)r_nF(p_n))+(1-R)}.
\end{equation}

In particular, if $p_1=\dots=p_n=p$, then
$$c_w(p)=\frac{S\int_{x\geq p}\ell d\mu}{S-(S-R)F(p)+(1-R)}$$
and
$$c_r(p)=\frac{S(1-F(p))p}{S-(S-R)F(p)+(1-R)}.$$
\end{lemma}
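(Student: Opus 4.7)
The plan is to set up a renewal-reward argument by taking the ``decision epochs'' to be exactly those time steps at which the server is idle. In steady state, the long-run average welfare (resp.\ revenue) per time step equals, by the renewal-reward theorem, the expected welfare (resp.\ revenue) accrued during a single cycle divided by the expected length of that cycle, where a cycle runs from one decision epoch to the next. This is the natural formulation because the server's future behavior is probabilistically reset every time it becomes idle, and conditioning on this event decouples the per-length contributions.

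I would start with the cycle-length calculation, which produces the common denominator. Starting at an idle step, with probability $1 - R$ no job arrives and the cycle lasts one step; with probability $r_i$ a length-$a_i$ job arrives and is either rejected with probability $F(p_i)$ (one step) or accepted with probability $1 - F(p_i)$ ($a_i$ steps). Hence the expected cycle length is $(1-R) + \sum_i r_i F(p_i) + \sum_i a_i r_i (1 - F(p_i))$, which after using $R = \sum_i r_i$ and $S = \sum_i a_i r_i$ rearranges to exactly $S - \sum_i (a_i-1) r_i F(p_i) + (1 - R)$. For the numerators, only accepted jobs contribute. In the welfare case, a length-$a_i$ job contributes $a_i \cdot E[x \mid x \ge p_i]$ conditional on acceptance, and multiplying by the acceptance probability $r_i(1 - F(p_i))$ cancels the denominator of the conditional expectation, leaving $a_i r_i \int_{x \ge p_i} \ell\, d\mu$. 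In the revenue case the contribution per accepted job is simply $a_i p_i$, yielding $a_i r_i (1 - F(p_i)) p_i$. Summing over $i$ and dividing by the cycle length gives the two displayed formulas. The single-price specialization follows by setting $p_i = p$ throughout and using $\sum_i (a_i - 1) r_i = S - R$.

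I expect the only real subtlety to be in formally invoking the renewal-reward theorem, which requires positive recurrence of the underlying Markov chain on server states. This is immediate in our setting because an idle step is guaranteed to recur within at most $\max_i a_i$ steps, so the chain is finite-state and irreducible on a recurrent class and the ratio above is indeed the steady-state per-step average. The remaining bookkeeping concerns the convention $F(p_i) = \Pr[x < p_i]$ from the preliminaries; under this convention ``value at least $p_i$'' is precisely the acceptance event with probability $1 - F(p_i)$, so both the length and reward expressions correctly account for the mass at $p_i$ in the discrete case. Beyond this point the derivation is purely mechanical.
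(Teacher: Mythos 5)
Your renewal-reward argument, with idle time steps as regeneration epochs, is correct and is essentially the same derivation the paper gives: the paper's "balance equation" for $c_w$ is precisely the equation $c_w \cdot \mathbb{E}[\text{cycle length}] = \mathbb{E}[\text{reward per cycle}]$, and its more formal variant via the stationary distribution of the server-state Markov chain is the standard justification of exactly this renewal computation. Your cycle-length and per-cycle-reward calculations match the paper's denominator and numerators term by term, so the proposal is a valid proof in the same spirit as the original.
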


With the formulas for welfare and revenue in hand, we are ready to show the main result of this section, which exhibits that the worst-case approximation ratio for welfare or revenue between the single-price setting and the multi-price setting is at least $50\%$. As we will see later in Subsection \ref{subsec:oneservertwolengths}, this bound is in fact tight, and it remains tight even when there are only two job lengths. Note that the bound holds for any number of job lengths, any distribution over job lengths, and any distribution over job values. 

\begin{theorem}
\label{THM:50PERCENTGENERAL}
For any prices $p_1,p_2,\dots,p_n$ that we set in the multi-price setting, we can achieve a welfare (resp. revenue, or any convex combination of welfare and revenue) approximation of at least $50\%$ in the one-price setting by using one of the prices $p_i$ as the single price.
\end{theorem}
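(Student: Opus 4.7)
The plan is to use the closed-form expressions of Lemma~\ref{LEM:FORMULAS} to rewrite everything in terms of two probability distributions over the indices, reduce the theorem to a single scalar inequality between the ``effective denominators'' of the single-price and multi-price settings, and finally discharge that inequality using the constraint $R \leq 1$.

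Write $W_i := \int_{x\ge p_i}\ell\,d\mu$ and $D_i := 1+(S-R)(1-F(p_i))$, so that Lemma~\ref{LEM:FORMULAS} gives $c_w(p_i)=S W_i/D_i$, and let $D := 1+\sum_j (a_j-1)r_j(1-F(p_j))$ denote the multi-price denominator. Introduce two distributions on indices: $\alpha_j := a_j r_j/S$ (numerator weights) and $\beta_j := (a_j-1)r_j/(S-R)$ (denominator weights). A direct calculation shows $D=\sum_j \beta_j D_j$, and Lemma~\ref{LEM:FORMULAS} can be rewritten as $c_w(p_1,\dots,p_n) = S\sum_j \alpha_j W_j / D$. (The degenerate case $S=R$ means every $a_j=1$, in which case $D_i=D=1$ and $\max_i c_w(p_i)\ge c_w(p_1,\dots,p_n)$ is immediate.) With this setup, I would take $i^* := \arg\max_i W_i/D_i$; the standard inequality that the maximum of ratios is at least any nonnegatively-weighted average of the ratios gives $\sum_j \alpha_j W_j \le (W_{i^*}/D_{i^*})\sum_j \alpha_j D_j$, whence
$$\frac{c_w(p_{i^*})}{c_w(p_1,\dots,p_n)} \ge \frac{D}{\sum_j \alpha_j D_j}.$$
So the welfare statement reduces to the key inequality $\sum_j \alpha_j D_j \le 2D$, which after substitution and cancellation is equivalent to
$$\sum_j r_j\,\frac{2S-a_j(S+R)}{S}\,(1-F(p_j)) \le 1.$$

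The main obstacle is verifying this key inequality, because the two weightings $\alpha$ and $\beta$ are genuinely incomparable: $\beta_j$ is zero whenever $a_j=1$ while $\alpha_j$ is not. The plan there is to note that the coefficient $2S-a_j(S+R)$ is positive only when $a_j<2S/(S+R)$, which (using $R>0$) forces $a_j=1$; for all longer jobs the term is nonpositive. Bounding $(1-F(p_j))\le 1$ on the finitely many positive terms, discarding the nonpositive ones, and using $\sum_{j:a_j=1}r_j\le R$ then yields the upper bound $R(S-R)/S$, which is at most $1$ exactly because $R\le 1$ gives $R(S-R)\le R\cdot S\le S$. This is also where I expect the factor $2$ to be tight, in the limit $R\uparrow 1$ combined with a length-$1$ job priced at $0$ and a very long job.

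For revenue, note that $c_r$ has the same denominator as $c_w$ and the numerator $\sum_j a_jr_j(1-F(p_j))p_j$, so the entire argument transfers verbatim after replacing $W_i$ by $(1-F(p_i))p_i$. For a convex combination $\lambda c_w+(1-\lambda)c_r$, the shared denominator means the combination itself is a ratio with numerator $\sum_j a_j r_j Z_j$ where $Z_i:=\lambda W_i+(1-\lambda)(1-F(p_i))p_i$, and the same reduction and the same key inequality $\sum_j \alpha_j D_j\le 2D$ give the $50\%$ bound.
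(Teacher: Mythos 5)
Your proof is correct, and it is a cleaner derivation that lands on the same key quantity as the paper's proof. The paper's Appendix B introduces the parameters $A_i=\int_{x\ge p_i}\ell\,d\mu/\int_{x\ge p_1}\ell\,d\mu$ and $B_i=F(p_i)$, does an explicit case analysis on which index achieves $\max_i c_w(p_i)$, and for each case optimizes over the feasible $A_i$ to show that the worst-case ratio is exactly
$$h(B_1,\dots,B_n)=\frac{S^2-\bigl(\sum_j(a_j-1)r_jB_j\bigr)S+S(1-R)}{S^2-\bigl(\sum_ja_jr_jB_j\bigr)(S-R)+S(1-R)}.$$
A short algebraic check shows that this $h$ is precisely your $D/\sum_j\alpha_jD_j$; your ``max of ratios $\geq$ weighted average of ratios'' step gets to the same lower bound in one line, without the argmax case split or the monotonicity-in-$A_i$ argument. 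Your final discharge of $\sum_j\alpha_jD_j\le 2D$ (split the coefficients $2S-a_j(S+R)$ by sign, observe only $a_j=1$ contributes positively, bound by $R(S-R)/S\le 1$) is essentially the same sign analysis the paper performs by checking the extreme points $B_i\in\{0,1\}$, but you handle several job lengths equal to $1$ directly rather than via the paper's footnote. The one thing the paper's longer route buys is that it shows $h(B)$ is the \emph{exact} infimum over the admissible $A_i$ (not merely a lower bound), which the paper later reuses in the tightness argument for Theorem~\ref{THM:TWOLENGTHSMAIN}; for Theorem~\ref{THM:50PERCENTGENERAL} itself your lower bound is all that is needed, and the revenue and convex-combination cases transfer exactly as you say because only the numerator weights $W_i$ change.
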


To prove Theorem \ref{THM:50PERCENTGENERAL}, we work with the ratio $$\frac{\max(c_w(p_1),\dots,c_w(p_n))}{c_w(p_1,\dots,p_n)}$$ and show that it is at least $1/2$ for any $p_1,\dots,p_n$ (and similarly for revenue or any convex combination of welfare and revenue). Using the formula (\ref{eqn:cw}) for $c_w$ given in Lemma \ref{LEM:FORMULAS}, we can write the ratio in terms of the variables $A_i=\frac{\int_{x\geq p_i}\ell d\mu}{\int_{x\geq p_1}\ell d\mu}$ and $B_i=F(p_i)$ for $1\leq i\leq n$. For any fixed values of $B_i$, we then deduce the values of $A_i$ that minimize the ratio of interest. Finally, we show that the remaining expression is always at least $1/2$ no matter the values of $B_i$. The full proof can be found in Appendix \ref{app:proofapproxgeneral}.

\subsection{Tighter bounds for specific parameters}
\label{subsec:oneservertwolengths}

Assume in this subsection that there are jobs of two lengths $a<b$ which appear at each time step with probability $r_1$ and $r_2$, respectively, where $r_1+r_2\leq 1$. Suppose that we set a price per time step $p_1$ for jobs of length $a$ and $p_2$ for jobs of length $b$. Recall that the value per time step of a job is drawn from a distribution with cumulative distribution function $F$ and probability density function $f$. 

Our next result exhibits a tight approximation bound for any fixed setting of the job lengths and their distribution.

\begin{theorem}
\label{THM:TWOLENGTHSMAIN}
For any prices $p_1$ and $p_2$ that we set in the two-price setting, we can achieve a welfare (resp. revenue, or any convex combination of welfare and revenue) approximation of at least 
\begin{align*}
\rho(a,b,r_1,r_2)&:=\frac{(ar_1+br_2)(ar_1+1-r_1)}{a(a-1)r_1^2+a(b-1)r_1r_2+ar_1+br_2} \\
\end{align*}
in the one-price setting by setting either $p_1$ or $p_2$ alone. Moreover, this bound is the best possible even if we are allowed to set a price different from $p_1$ or $p_2$ in the one-price setting.
\end{theorem}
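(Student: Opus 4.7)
My plan is to specialize Lemma \ref{LEM:FORMULAS} to $n = 2$ and reduce the bound to an algebraic inequality in a handful of real variables. Let $S = a r_1 + b r_2$, $R = r_1 + r_2$, $u_i = \int_{x \geq p_i} \ell \, d\mu$, and $v_i = F(p_i)$. Writing $c_w(p_i)$ for the one-price welfare when the common price is set to $p_i$, we have
\[
c_w(p_1, p_2) = \frac{a r_1 u_1 + b r_2 u_2}{1 + S - R - (a - 1) r_1 v_1 - (b - 1) r_2 v_2}, \qquad c_w(p_i) = \frac{S u_i}{1 + S - R - (S - R) v_i}.
\]
The revenue formulas are obtained by replacing each $u_i$ with $(1 - v_i) p_i$, and a convex combination of welfare and revenue replaces each $u_i$ by the corresponding convex combination; the same argument therefore handles all three cases uniformly, so I focus on welfare.

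To prove the lower bound, I would apply the elementary mediant inequality $\max(N_1 / D_1, N_2 / D_2) \geq (\lambda N_1 + (1 - \lambda) N_2) / (\lambda D_1 + (1 - \lambda) D_2)$ with the specific weights $\lambda = a r_1 / S$ and $1 - \lambda = b r_2 / S$. This choice is natural because it makes the numerator of the resulting mediant coincide (up to the factor $1/S$) with the numerator of $c_w(p_1, p_2)$, so the ratio of the mediant to $c_w(p_1, p_2)$ becomes a function of $v_1, v_2$ alone. A short manipulation shows this ratio equals $D / (D + \varepsilon)$, where $D = 1 + S - R - (a - 1) r_1 v_1 - (b - 1) r_2 v_2$ is the denominator of $c_w(p_1, p_2)$ and $\varepsilon = r_1 r_2 (b - a)(v_2 - v_1) / S$. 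Since $b > a$, the cases $v_2 \leq v_1$ give $\varepsilon \leq 0$ and hence ratio $\geq 1$, so only $v_2 > v_1$ is nontrivial; computing $\partial_t(D/(D+\varepsilon))$ along the lines $v_1 = 0$ and $v_2 = 1$ separately, one checks that $D / (D + \varepsilon)$ is monotone and therefore minimized over $[0, 1]^2$ at the corner $(v_1, v_2) = (0, 1)$, and the minimum value simplifies exactly to $\rho(a, b, r_1, r_2)$.

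For tightness, I would exhibit a distribution and a pair of prices $(p_1, p_2)$ for which no single price --- even one outside $\{p_1, p_2\}$ --- achieves more than $\rho \cdot c_w(p_1, p_2)$. A natural candidate is a two-point distribution at values $x_L < x_H$ with masses $q, 1 - q$, together with $p_1 \leq x_L$ and $x_L < p_2 \leq x_H$. Because the single-price welfare $c_w(p)$ takes only three distinct values as $p$ ranges over $[0, \infty)$ (according to whether $p \leq x_L$, $x_L < p \leq x_H$, or $p > x_H$), checking tightness reduces to a finite case comparison. Tuning $q$ and $x_H$ so that the welfares at $p \leq x_L$ and at $p \in (x_L, x_H]$ coincide saturates the mediant inequality (which is tight exactly when the two single-price objectives are equal) and realizes the ratio at $\rho$; the third case, $p > x_H$, gives welfare zero and is dominated.

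The main obstacle is verifying that the minimum of $D / (D + \varepsilon)$ over $[0, 1]^2$ is attained at the corner $(0, 1)$ rather than at an interior critical point, and then showing that the resulting expression collapses cleanly to the stated $\rho$; the algebra, though routine, is delicate because of the cross-terms involving $a, b, r_1, r_2$. A secondary difficulty is choosing the free parameters $q, x_H$ in the tight instance so that the mediant is saturated and no alternative single price does strictly better, and, in the convex-combination case, ensuring the construction respects both objectives simultaneously.
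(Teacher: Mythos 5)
Your proposal is correct and, once the algebra is unwound, lands on exactly the quantity the paper minimizes: setting $B_i = v_i$, the paper's intermediate function $h(B_1,B_2)$ from the proof of Theorem~\ref{THM:50PERCENTGENERAL} is your $D/(D+\varepsilon)$ after clearing a common factor of $S$. The real difference is in how you get there. The paper does a case analysis on which $c_w(p_i)$ attains the max and then pushes the free variables $A_i = \int_{x\geq p_i}\ell\,d\mu / \int_{x\geq p_1}\ell\,d\mu$ to their constrained extremes --- precisely the point where $c_w(p_1)=c_w(p_2)$, which is also the unique point where a weighted mediant inequality is tight. Your mediant-with-chosen-weights argument makes this explicit in one stroke instead of two cases, a genuine improvement in presentation, but it is a repackaging of the paper's adversarial choice of $A_2$ rather than a new idea. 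Two points you should firm up. First, your phrase about checking the derivative ``along the lines $v_1=0$ and $v_2=1$'' is not quite the right argument; what you actually need is that $v_1\mapsto D/(D+\varepsilon)$ (for fixed $v_2$) is a ratio of affine functions, hence monotone, and likewise in $v_2$ --- the same observation the paper records in a footnote --- and then the signs: $\partial_{v_2}(\varepsilon/D)>0$ is immediate, while the condition for $\partial_{v_1}(\varepsilon/D)\le 0$, namely $D\ge r_1(a-1)(v_2-v_1)$, simplifies after the $v_1$ terms cancel to $1+(S-R)(1-v_2)\ge 0$, which always holds; this pins the minimizer to the corner $(0,1)$ without worrying about interior critical points. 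Second, for tightness the mediant step alone only shows equality when $c_w(p_1)=c_w(p_2)$; to rule out a better third single price you do need the finite-case argument for the two-point distribution that you sketch, and you should carry out the check (as the paper does) that the relation $q_2 v_2/(q_1 v_1)=1/(S-R)$ makes $c_w(p\le x_L)$ and $c_w(x_L<p\le x_H)$ both equal $Sq_2v_2$, which is what saturates the bound.
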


To prove this theorem, we work with the expression in terms of $B_i=F(p_i)$ that we have from the proof of Theorem \ref{THM:50PERCENTGENERAL}. We then show that the expression is minimized when we take $B_1=0$ and $B_2=1$, meaning that the distribution on job values is bimodal. The proof method readily yields an example showing that our bound is tight, where the bimodal distribution on job values puts a large probability on a low value and a small probability on a high value. The full proof can be found in Appendix \ref{app:twolengthsmain}.

Theorem \ref{THM:TWOLENGTHSMAIN} allows us to obtain the worst-case approximation ratio in arbitrary settings of the parameters. Some examples follow.

\begin{itemize}
\item Suppose that $r_1+r_2=1$, i.e., a job appears at every time step. This assumption can in fact be made without loss of generality, because we can convert jobs not arriving to jobs arriving with a value of 0 as long as all prices are nonzero. This only changes $F$ and so is irrelevant for the calculation of $\rho$. In this case, the approximation ratio is
$$\rho(a,b,r,1-r)=\frac{(ar+b-br)(ar+1-r)}{a(a-b)r^2+b(a-1)r+b}.$$
\item Next, we look at how the bound behaves when the job lengths are close together or far apart. Suppose for convenience that $r_1=r_2=1/2$, i.e., a job appears at every time step and is of length $a$ or $b$ with equal probability. Then the approximation ratio is
$$\rho\left(a,b,\frac{1}{2},\frac{1}{2}\right)=\frac{(a+b)(a+1)}{a^2+ab+2b}.$$
In particular, the approximation ratio is 
\begin{itemize}
\item $\frac{6}{7}$ if $a=1$ and $b=2$;
\item $\frac{4}{5}$ if $a=1$ and $b=3$;
\item $\frac{15}{16}$ if $a=2$ and $b=3$;
\item $\frac{a+1}{a+2}$ if $b\rightarrow\infty$.
\end{itemize}
Note that if $b\rightarrow\infty$, then the ratio approaches 1 as $a$ grows. This makes sense because when the two job lengths are large and close to each other, there is little difference between accepting one or the other. Also, if we take $b=a+1$, then the ratio becomes
$$\rho\left(a,a+1,\frac{1}{2},\frac{1}{2}\right)=\frac{2a^2+3a+1}{2a^2+3a+2}.$$ This also converges to 1 as $a\rightarrow\infty$.
\item We now consider the behavior of the bound when the shorter job length is fixed. Suppose for convenience that $a=1$, i.e., the shorter job length is 1. Then the approximation ratio is
$$\rho(1,b,r_1,r_2)=\frac{r_1+br_2}{(b-1)r_1r_2+r_1+br_2}.$$
As the longer job length grows, this ratio decreases and approaches $\frac{1}{1+r_1}$. This is consistent with the intuition that the approximation gets worse as the job lengths are farther apart.
\item We next look at the ``opposite'' of the previous case and assume that the longer job length is extremely large. Suppose that $b\rightarrow\infty$. Then the approximation ratio is
$$\rho(a,\infty,r_1,r_2)=\frac{ar_1+1-r_1}{ar_1+1}.$$
Note that this ratio does not depend on $r_2$. The ratio increases as the shorter job length grows. Again, this is consistent with the intuition that the approximation gets worse as the job lengths are farther apart.
\end{itemize}

If we fix the probabilities $r_1,r_2$, we can derive a tight worst-case bound over all possible job lengths $a,b$.

\begin{theorem}
\label{thm:twolengthsfixprob}
For fixed $r_1,r_2$, we have $$\rho(a,b,r_1,r_2)\geq\frac{1}{1+r_1}$$
for arbitrary $a,b$. Moreover, this bound is the best possible.
\end{theorem}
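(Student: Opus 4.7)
The plan is to use the closed form from Theorem~\ref{THM:TWOLENGTHSMAIN} directly. Since the denominator of $\rho(a,b,r_1,r_2)$ is clearly positive, the inequality $\rho(a,b,r_1,r_2) \geq \frac{1}{1+r_1}$ is equivalent to the polynomial inequality
\begin{equation*}
(1+r_1)(ar_1+br_2)(ar_1+1-r_1) \;\geq\; a(a-1)r_1^2 + a(b-1)r_1r_2 + ar_1 + br_2.
\end{equation*}
I would cross-multiply (no sign flip) and show that the difference LHS $-$ RHS is nonnegative on the relevant domain, namely $a,b \geq 1$ integers with $a \leq b$ and $r_1,r_2 \geq 0$, $r_1+r_2 \leq 1$.

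The main step is to expand both sides and collect terms. The hope — and the place where the argument either works cleanly or requires more care — is that after cancellation the difference factors in a way that exposes its nonnegativity. Doing the expansion, the $a^2r_1^2$, $abr_1r_2$, $ar_1$, and $br_2$ terms cancel between LHS and RHS, and the residue collects into
\begin{equation*}
(a-1)\,r_1^2\,(ar_1 + br_2) \;+\; ar_1(r_1 + r_2),
\end{equation*}
which is manifestly nonnegative since $a \geq 1$. This establishes the inequality for all valid parameter settings.

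For the tightness claim, I would invoke the special case already worked out in the bulleted examples of Subsection~\ref{subsec:oneservertwolengths}: taking $a=1$ gives
\begin{equation*}
\rho(1,b,r_1,r_2) = \frac{r_1+br_2}{(b-1)r_1r_2+r_1+br_2},
\end{equation*}
and letting $b \to \infty$ the ratio tends to $\frac{r_2}{r_1r_2+r_2} = \frac{1}{1+r_1}$. Hence no constant strictly larger than $\frac{1}{1+r_1}$ can serve as a uniform lower bound over all $a \leq b$.

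The only real obstacle is the bookkeeping in the expansion; the residue is a sum of six monomials that must group cleanly into the displayed factored form. If the factorization turned out to be less transparent, a backup approach would be to fix $r_1,r_2$, treat $a$ and $b$ as continuous with $a \geq 1$, and verify that $\partial\rho/\partial b$ has constant sign so that the infimum is attained at $b \to \infty$, reducing the problem to a one-variable analysis in $a$. But the direct factorization above is cleaner and avoids calculus.
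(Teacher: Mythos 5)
Your proposal is correct and follows essentially the same route as the paper: cross-multiply the inequality $\rho(a,b,r_1,r_2)\geq\frac{1}{1+r_1}$, observe that the difference simplifies to a sum of manifestly nonnegative monomials (your factored form $(a-1)r_1^2(ar_1+br_2)+ar_1(r_1+r_2)$ is the same polynomial the paper writes as $r_1(a(a-1)r_1^2+(a-1)br_1r_2+ar_1+ar_2)$), and establish tightness via $a=1$, $b\to\infty$.
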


\begin{proof}
From Theorem \ref{THM:TWOLENGTHSMAIN}, we need to show the inequality
$$\frac{(ar_1+br_2)(ar_1+1-r_1)}{a(a-1)r_1^2+a(b-1)r_1r_2+ar_1+br_2}\geq\frac{1}{1+r_1}.$$
This simplifies to
$$r_1(a(a-1)r_1^2+(a-1)br_1r_2+ar_1+ar_2)\geq 0.$$
Since each term on the left-hand side is nonnegative, the inequality holds.

For tightness of the bound, let $a=1$ and $b\rightarrow\infty$. The approximation ratio is
\begin{align*}
\rho(1,\infty,r_1,r_2)&=\frac{r_1+br_2}{r_1(1-r_2)+br_2(1+r_1)} \\
&= \frac{1}{1+r_1},
\end{align*}
as desired.
\end{proof}

Note that the fact that the bound is tight at $a=1$ and $b\rightarrow\infty$ is consistent with the intuition that the further apart the job lengths are, the more welfare and revenue there is to be gained by setting different prices for different job lengths, and consequently the worse the approximation ratio.

Finally, we show that we can obtain at least $50\%$ of the welfare or revenue from setting two prices by using one of those prices.

\begin{theorem}
\label{thm:twolengthshalf}
For arbitrary $a,b,r_1,r_2$, we have
$$\rho(a,b,r_1,r_2)\geq\frac{1}{2}.$$
Moreover, this bound is the best possible.
\end{theorem}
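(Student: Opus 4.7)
The plan is to cross-multiply the inequality $\rho(a,b,r_1,r_2)\geq 1/2$ so that it becomes a polynomial inequality in the nonnegative quantities $a,b,r_1,r_2$, and then verify nonnegativity by a direct term-by-term argument. Specifically, since the denominator in the definition of $\rho$ is positive, the desired inequality is equivalent to
$$2(ar_1+br_2)(ar_1+1-r_1)\;\geq\;a(a-1)r_1^2+a(b-1)r_1r_2+ar_1+br_2.$$
Expanding the left-hand side and subtracting the right-hand side, I expect the difference to collect into
$$a(a-1)r_1^2 \;+\; ar_1 \;+\; br_2 \;+\; r_1r_2\bigl(ab+a-2b\bigr),$$
so the task reduces to showing this expression is $\geq 0$.

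Next I would split into two cases according to whether the coefficient $ab+a-2b=(a-2)b+a$ is nonnegative. When $a\geq 2$, this coefficient is at least $a>0$, and every one of the four summands above is manifestly nonnegative, so the inequality is immediate. The only case that is not automatic is $a=1$, where the first term vanishes and the last coefficient $ab+a-2b=1-b$ can be negative. Here I would simplify the expression to $r_1+br_2-(b-1)r_1r_2$ and regroup it as $r_1(1+r_2)+br_2(1-r_1)$, which is visibly nonnegative since $r_1\leq 1$.

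For the tightness claim, I would appeal to Theorem~\ref{thm:twolengthsfixprob}: the bound $\rho(a,b,r_1,r_2)\geq 1/(1+r_1)$ there is tight at $a=1$ and $b\to\infty$, so letting $r_1\to 1$ (with $r_2\to 0$ so that $r_1+r_2\leq 1$ remains satisfied) drives $1/(1+r_1)$ down to $1/2$, demonstrating that no constant larger than $1/2$ is possible. Alternatively, the concrete limit $\rho(1,b,r_1,1-r_1)\to 1/(1+r_1)$ as $b\to\infty$ followed by $r_1\to 1$ furnishes the same witness without recourse to the previous theorem.

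The only real obstacle is noticing that a uniform term-by-term bound does not work because the cross-term $r_1r_2(ab+a-2b)$ is negative precisely when $a=1$ and $b\geq 2$; the resolution is the small regrouping $r_1+br_2-(b-1)r_1r_2=r_1(1+r_2)+br_2(1-r_1)$, which happens to not even require the constraint $r_1+r_2\leq 1$.
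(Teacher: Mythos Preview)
Your argument is correct. The algebra checks out: the difference of the two sides is indeed $a(a-1)r_1^2 + ar_1 + br_2 + r_1r_2(ab+a-2b)$, and your case split on $a\geq 2$ versus $a=1$ together with the regrouping $r_1(1+r_2)+br_2(1-r_1)$ handles everything. The tightness argument via the iterated limit (first $b\to\infty$, then $r_1\to 1$) is also valid, since for any $\varepsilon>0$ one can first fix $r_1$ with $1/(1+r_1)<1/2+\varepsilon/2$ and then choose $b$ large enough.

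The paper takes a different route for the lower bound: it simply invokes Theorem~\ref{thm:twolengthsfixprob}, which already gives $\rho(a,b,r_1,r_2)\geq 1/(1+r_1)$, and then observes $r_1\leq 1$. This is a one-line reduction, whereas your direct polynomial verification is self-contained but requires the case analysis. For tightness, the paper exhibits an explicit coupled path $a=1$, $r_2=1-r_1$, $b=1/(1-r_1)^2$, and computes the limit as $r_1\to 1^-$ directly; it even remarks (in a footnote) that the limiting value depends on the direction of approach, e.g.\ $b=1/(1-r_1)$ gives $2/3$ rather than $1/2$. Your iterated-limit argument sidesteps this subtlety and is arguably cleaner, though the paper's explicit path has the advantage of furnishing a concrete one-parameter family of near-extremal instances.
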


\begin{proof}
Using Theorem \ref{thm:twolengthsfixprob}, we find that
$$\rho(a,b,r_1,r_2)\geq\frac{1}{1+r_1}\geq\frac{1}{2}$$
since $r_1\leq 1$.

For tightness of the bound, let $a=1$ and $r_2=1-r_1$. The approximation ratio is
$$\rho(1,b,r,1-r)=\frac{r+(1-r)b}{r^2+(1-r^2)b}.$$
Taking $b=\frac{1}{(1-r)^2}$, the ratio becomes
$$\frac{r+\frac{1}{1-r}}{r^2+\frac{1+r}{1-r}}=\frac{r-r^2+1}{r^2-r^3+1+r},$$
which approaches $1/2$ as $r\rightarrow 1^-$.\footnote{The approximation ratio does not necessarily converge to $1/2$ for an arbitrary direction as $b\rightarrow\infty$ and $r\rightarrow 1^-$. For instance, if we take $b=\frac{1}{1-r}$ and $r\rightarrow 1^-$, then the ratio converges to $2/3$.}
\end{proof}

While we do not have a general formula for the worst-case approximation ratio for each choice of the parameters $a_1,\dots,a_n,r_1,\dots,r_n$ as we do for the case of two job lengths, the function $h$ in the proof of Theorem \ref{THM:50PERCENTGENERAL} still allows us to derive a tighter bound for each specific case. Note that to find the minimum of $h$, it suffices to check $B_i=0$ or 1 (see Footnote \ref{footnote:zeroorone}), so we only have a finite number of cases to check. 

In fact, one can show that the minimum is always attained when $B_1=0$ and $B_n=1$. However, as we show next, the remaining $B_i$'s may be 0 or 1 at the minimum, depending on the job lengths and the probability distribution over them. We take $n=3$ and assume for convenience that $r_1=r_2=r_3=1/3$ and $(a_1,a_2)=(2,3)$. The following examples show that $B_2$ can be 0 or 1 at the minimum depending on how far the longest job length $a_3$ is from $a_1$ and $a_2$.

\begin{itemize}
\item Suppose that $a_3=6$. Then we have
$$h(B_1,B_2,B_3)=\frac{121-11B_1-22B_2-55B_3}{121-16B_1-24B_2-48B_3}.$$
This is minimized at $(B_1,B_2,B_3)=(0,1,1)$, where its value is $44/49$.
\item Suppose that $a_3=7$. Then we have
$$h(B_1,B_2,B_3)=\frac{48-4B_1-8B_2-24B_3}{48-6B_1-9B_2-21B_3}.$$ 
This is minimized at $(B_1,B_2,B_3)=(0,B_2,1)$ for any $0\leq B_2\leq 1$, where its value is $8/9$.
\item Suppose that $a_3=8$. Then we have
$$h(B_1,B_2,B_3)=\frac{169-13B_1-26B_2-91B_3}{169-20B_1-30B_2-80B_3}.$$
This is minimized at $(B_1,B_2,B_3)=(0,0,1)$, where its value is $78/89$.
\end{itemize}

The above examples show that the transition point where the optimal value of $B_2$ goes from 0 to 1 for $r_1=r_2=r_3=1/3$ and $(a_1,a_2)=(2,3)$ is at $a_3=7$, where $h(B_1,B_2,B_3)$ takes on the same value for any $0\leq B_2\leq 1$.

\subsection{Extension}
\label{subsec:extension}

In this subsection, we show that by using a single price, we can obtain $50\%$ of the welfare not only compared to using multiple prices, but also compared to the offline optimal welfare.\footnote{For the offline optimal welfare, we compute the limit of the expected average offline optimal welfare per time step as the time horizon grows.} In fact, we will also not need the assumption that the job length and the value per time step are independent. However, the result only works for particular prices rather than arbitrary ones, and we cannot obtain tighter results for specific parameters using this method.

\begin{theorem}
\label{THM:OFFLINEOPT}
Assume that the job length and the value per time step are not necessarily independent. There exists a price $p$ such that we can achieve a $50\%$ approximation of the offline optimal welfare by using $p$ as the single price.
\end{theorem}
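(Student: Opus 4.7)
The plan is to upper-bound the offline optimum by a linear programming relaxation and then exhibit a single price whose long-run welfare rate beats half of that LP value; since the LP dominates the offline optimum, this delivers the $50\%$ guarantee.

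First I would extend the steady-state welfare formula from Lemma~\ref{LEM:FORMULAS} to the setting where $(v,\ell)$ has an arbitrary joint distribution $D$. Reducing to the case in which a job arrives at every time step (by padding with zero-value jobs, harmless under any positive price), a renewal-reward argument gives
\[
c_w(p)=\frac{N(p)}{F(p)+M(p)},
\]
with $N(p):=\mathbb{E}_D[v\ell\,\mathbf{1}[v\ge p]]$, $M(p):=\mathbb{E}_D[\ell\,\mathbf{1}[v\ge p]]$, and $F(p):=\Pr_D[v<p]$; this specialises to Lemma~\ref{LEM:FORMULAS} under independence. I would then bound the offline optimum per time step by the fractional-knapsack LP value $L^*:=N(v^*)+v^*(1-M(v^*))$, where $v^*$ is the LP threshold (the smallest value at which $M\le 1$); the upper bound $\mathrm{OPT}\le L^*$ comes from the LP relaxation of the offline scheduling problem combined with the LLN in the long-horizon limit.

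Then I would split into cases. If $\mathbb{E}[\ell]\le 1$, the LP constraint is slack and $p=0$ works because $c_w(0)=\mathbb{E}[v\ell]/\mathbb{E}[\ell]\ge \mathbb{E}[v\ell]\ge L^*$. If $M$ is continuous at $v^*$ with $M(v^*)=1$ exactly, then $p=v^*$ works because $F(v^*)\le 1$ gives $c_w(v^*)\ge N(v^*)/2=L^*/2$. The subtle case is when $M$ jumps strictly across $1$ at $v^*$ so the LP accepts only a fraction of the mass at $v=v^*$; here I would consider the two candidate prices $p_1=v^*$ (including the boundary mass) and $p_2=v^*+\epsilon$ (excluding it), and show that at least one attains $L^*/2$.

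The main obstacle is the last case, handled by contradiction. Suppose both $c_w(p_1)<L^*/2$ and $c_w(p_2)<L^*/2$. Using the identity $L^*=N(p_2)+v^*(1-M(p_2))$ (valid as $\epsilon\to 0^+$), the inequality for $p_2$ rearranges to $L^*(2-F(p_2)-M(p_2))<2v^*(1-M(p_2))$, and $F(p_2)\le 1$ then forces $L^*<2v^*$. Writing $N(p_1)=L^*+v^*(M(p_1)-1)$, the inequality for $p_1$ initially forces $F(p_1)+M(p_1)>2$ (otherwise a nonnegative quantity would fall strictly below a nonpositive one) and simplifies to $L^*(F(p_1)+M(p_1)-2)>2v^*(M(p_1)-1)$. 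Feeding $L^*<2v^*$ into this yields $F(p_1)+M(p_1)-2>M(p_1)-1$, i.e., $F(p_1)>1$, contradicting $F\le 1$. The delicate part is that each bound must hold almost to equality; the payoff is that the argument needs no distributional assumption beyond the integrality of job lengths, matching the relaxation promised in the theorem.
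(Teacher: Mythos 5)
Your argument is correct and takes a genuinely different route from the paper's. The paper instantiates the Feldman--Gravin--Lucier ``balanced price'' recipe: it sets $p = Opt/2$ and, at each time step $t$, decomposes welfare into revenue plus consumer surplus --- the server contributes revenue at least $p\,y_t$ when occupied with probability $y_t$, while the arriving job contributes expected surplus at least $(Opt-p)(1-y_t)$ when the server is free, so the two pieces always total at least $Opt/2$ per step. You instead keep the renewal-reward rate $c_w(p)=N(p)/(F(p)+M(p))$ front and center (a direct extension of Lemma~\ref{LEM:FORMULAS} to dependent $(v,\ell)$, which checks out), choose the LP threshold $v^*$ or $v^*+\epsilon$ as the candidate price, and show one of the two clears $L^*/2$. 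I verified the Case~3 contradiction: writing $L^*=N(p_k)+v^*(1-M(p_k))$ for $k\in\{1,2\}$, the failure at $p_2$ combined with $M(p_2)<1$ and $F\le 1$ forces $L^*<2v^*$, which fed into the failure at $p_1$ (where $M(p_1)>1$, so $F(p_1)+M(p_1)>2$) forces $F(p_1)>1$, a contradiction. The paper's route is shorter, sidesteps the atom-at-$v^*$ case analysis, and slots visibly into the posted-price literature cited in Section~\ref{subsec:extension}; yours is more self-contained in that it reuses the steady-state formula the paper already develops, and it produces the economically natural market-clearing threshold price rather than the more opaque $Opt/2$. Two small things to tidy: after padding with zero-value, length-one jobs you automatically have $\mathbb{E}[\ell]\ge 1$, so your Case~1 condition should be restated as $v^*=0$ (equivalently, expected length of the real jobs at most~$1$) and can simply be folded into the threshold analysis with $p=\epsilon$; and as $\epsilon\to 0^+$ the assumption $c_w(p_2)<L^*/2$ only passes to a weak inequality in the limit, but the contradiction still closes with one strict and one weak bound. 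These are bookkeeping points --- the core of the proof is sound.
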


The proof of Theorem~\ref{THM:OFFLINEOPT} can be found in Appendix~\ref{sec:offlineopt}.

\section{Multiple Servers}
\label{sec:multservers}

In this section, we assume that there are multiple servers, each of which receives jobs of various lengths. Under the assumption that the servers have the same probability of receiving no job at a time step, we show in Subsection \ref{subsec:multserversoneprice} an approximation bound of the welfare and revenue of setting one price for all servers compared to setting an individual price for each server. 
This yields a strong bound when at least one of the dimensions of the parameters is not too extreme, e.g., the number of servers or the job lengths are not too large. In Subsection \ref{subsec:multserversmultprice}, we combine the newly obtained results with those from Section \ref{sec:oneserver}. Using a composition technique, we derive a general result that compares the welfare and revenue obtained by a restricted mechanism that sets the same price for all servers and all job lengths against those obtained by a mechanism that can set a different price for each job length of each particular server. We show that even with the heavy restrictions, the former mechanism still provides a reasonable approximation to the latter in a wide range of situations.  Using similar techniques, we also obtain approximation bounds when this assumption does not hold but there is only one job length across all servers. The analysis of the latter setting is deferred to Appendix \ref{sec:multserversonelength}.

As in Section \ref{sec:oneserver}, our approximation results hold for arbitrary (i.e., not necessarily optimal) pricing schemes, and the price we use in the single-price setting can be drawn from one of the prices in the multi-price setting.

\subsection{One price per server}
\label{subsec:multserversoneprice}

Assume that at each time step, either zero or one job appears for each server $1\leq j\leq n$. Server $j$ receives jobs of length $a_{j1}\leq a_{j2}\leq\dots\leq a_{jn_j}$ with probability $r_{j1},r_{j2},\dots,r_{jn_j}$, respectively. Suppose that we set a price per time step $p_j$ for all jobs on server $j$. Recall that the value per time step of a job is drawn from a distribution with cumulative distribution function $F$ and probability density function $f$, and that we assume that $\sum_{i=1}^{n_j}r_{ji}$ is constant. Let $S_j=a_{j1}r_{j1}+\dots+a_{jn_j}r_{jn_j}$ and $R=r_{j1}+\dots+r_{jn_j}$.

Using the formula (\ref{eqn:cw}) for $c_w$ given in Lemma \ref{LEM:FORMULAS}, we find that the welfare per time step is
\begin{align*}
d_w(p_1,p_2,\dots,p_n)&=\sum_{j=1}^n\frac{S_j\int_{x\geq p_j}\ell d\mu}{S_j-(S_j-R)F(p_j)+(1-R)}\\
&=\sum_{j=1}^n\frac{\int_{x\geq p_j}\ell d\mu}{1-\left(1-\frac{R}{S_j}\right)F(p_j)+\frac{1-R}{S_j}}.
\end{align*}

If we set the same price $p=p_1=\dots=p_n$ for different servers, our welfare per time step becomes
$$d_w(p)=\sum_{j=1}^n\frac{\int_{x\geq p}\ell d\mu}{1-\left(1-\frac{R}{S_j}\right)F(p)+\frac{1-R}{S_j}}.$$

Similarly, we have the formulas for revenue per time step
$$d_r(p_1,p_2,\dots,p_n)=\sum_{j=1}^n\frac{(1-F(p_j))p_j}{1-\left(1-\frac{R}{S_j}\right)F(p_j)+\frac{1-R}{S_j}}$$
and
$$d_r(p)=\sum_{j=1}^n\frac{(1-F(p))p}{1-\left(1-\frac{R}{S_j}\right)F(p)+\frac{1-R}{S_j}}.$$

We show that if at least one dimension of the parameters is not too extreme, e.g., the number of servers or the job lengths are bounded, then we can obtain a reasonable approximation of the welfare and revenue in the multi-price setting by setting just one price.

\begin{theorem}
\label{THM:MULTSERVERSMULTLENGTHS}
For any prices $p_1,p_2,\dots,p_n$ that we set in the multi-price setting, we can achieve a welfare (resp. revenue, or any convex combination of welfare and revenue) approximation of at least 
$$\max\left(\frac{1}{H_n},\frac{M-1}{M\ln M}\right)$$
in the one-price setting, where $H_n=1+\frac{1}{2}+\dots+\frac{1}{n}\approx\ln n$ is the $n$th Harmonic number and $M=\max_{i,j}\frac{S_i}{S_j}$.
\end{theorem}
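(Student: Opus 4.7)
The plan is to work with the summand decomposition $d_w(p)=\sum_j T_j(p)$, where $T_j(p):=I(p)/g_j(p)$, $I(p):=\int_{x\geq p}\ell\,d\mu$, and $g_j(p):=(1-F(p))+(RF(p)+1-R)/S_j$. The key structural observation is that $g_j(p)$ depends on $j$ only through $1/S_j$, so at every price $p$, $g_j(p)$ is decreasing in $S_j$ and hence $T_j(p)$ is increasing in $S_j$. After reindexing so that $S_1\geq S_2\geq\dots\geq S_n$, one therefore obtains the \emph{uniform} ordering $T_1(p)\geq T_2(p)\geq\dots\geq T_n(p)$ at every single price. Writing $y_k:=T_k(p_k)$, so that $d_w(p_1,\dots,p_n)=\sum_k y_k$, the crucial consequence is that for the single price $p=p_k$,
\[
d_w(p_k)=\sum_{j=1}^n T_j(p_k)\;\geq\;\sum_{j=1}^k T_j(p_k)\;\geq\;k\,T_k(p_k)\;=\;k\,y_k,
\]
and therefore $\max_k d_w(p_k)\geq\max_k ky_k$.

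For the $1/H_n$ half of the bound I would apply the standard harmonic rearrangement: letting $C:=\max_k ky_k$, the per-index inequalities $y_k\leq C/k$ sum to $\sum_k y_k\leq CH_n$, which rearranges to $\max_k d_w(p_k)\geq d_w(p_1,\dots,p_n)/H_n$.

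For the $(M-1)/(M\ln M)$ half, the bare counting factor $k$ must be replaced by a quantity that is sensitive to the actual spread of the $S_j$'s. The sharper starting point is the identity $T_j(p_k)/y_k=g_k(p_k)/g_j(p_k)$, which for $j\leq k$ lies in the interval $[1,\,S_j/S_k]$, so the gain over the trivial count~$k$ grows when the prices $p_k$ are not close to the top of the support of $F$. I would combine this with a geometric bucketing along $\log S$: partition $[\min_j S_j,\max_j S_j]$ into $\Theta(\log M)$ multiplicative slabs of bounded width, so that within each slab the $S_j$'s are nearly constant. Arguing that some price $p_k$ captures a constant fraction of the contribution of its slab replaces the discrete harmonic sum $\sum 1/k=H_n$ by the Riemann approximation to the integral $\int_1^M ds/s=\ln M$, with the prefactor $(M-1)/M$ arising from the endpoint normalization. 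Equivalently, one can phrase this as a randomized single price with weights proportional to $1/S_k$, suitably normalized against the spread of $S$-values, and use $\max\geq\mathbb{E}$.

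The main obstacle is precisely this $(M-1)/(M\ln M)$ bound: the crude inequality $d_w(p_k)\geq ky_k$ from the first step cannot on its own produce anything stronger than $1/H_n$, so the $M$-dependent improvement must come from the refined $T_j(p_k)/y_k\in[1,S_j/S_k]$ estimate, and the delicate point is that this refinement only becomes large in the ``low $F(p)$'' regime, so the argument must glue the bucket-wise contributions together while controlling the intra-bucket slack. The extension from welfare to revenue and to any convex combination is then straightforward: the revenue summand $T_j^{(r)}(p)=(1-F(p))p/g_j(p)$ shares the same denominator $g_j(p)$, so the uniform ordering and both derivations go through verbatim, and linearity handles convex combinations.
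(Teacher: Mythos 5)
Your $1/H_n$ argument is correct and is in essence the paper's argument, but organized more cleanly: you exploit the pointwise monotonicity of $T_j(p)$ in $S_j$ directly and avoid the paper's explicit minimization over the integral-ratio parameters $A_i$. In particular, with $S_1\geq\dots\geq S_n$ so that $T_1(p)\geq\dots\geq T_n(p)$ at every $p$, the chain $d_w(p_k)\geq ky_k$ followed by $\sum_k y_k\leq(\max_k ky_k)H_n$ is exactly the weighted-harmonic estimate underlying the paper's bound $h_j\leq 1/(n+1-j)$, just without the detour through the function $h$. This is a nice simplification.

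However, the $(M-1)/(M\ln M)$ half has a genuine gap, and you have also misidentified where the $M$-dependent improvement comes from. You point to the ratio $T_j(p_k)/y_k=g_k(p_k)/g_j(p_k)\in[1,S_j/S_k]$ for $j\leq k$, but the lower bound of that interval is just $1$, which is all you already used in the $1/H_n$ step; the upper bound $S_j/S_k$ is useless for a lower bound on $d_w(p_k)$. Consequently the proposed geometric bucketing of $\log S$ is aimed at the wrong terms, is left as a sketch, and (even if completed) would likely lose constants against the exact $(M-1)/(M\ln M)$ target. The improvement actually comes from the terms $j>k$ that you discarded in the $1/H_n$ step: for $j>k$ one has $T_j(p_k)/y_k=g_k(p_k)/g_j(p_k)\geq S_j/S_k\geq 1/M$, so
\[
d_w(p_k)\;=\;\sum_{j\leq k}T_j(p_k)+\sum_{j>k}T_j(p_k)\;\geq\;y_k\left(k+\frac{n-k}{M}\right).
\]
Setting $C=\max_k y_k\bigl(k+(n-k)/M\bigr)$ then gives $\sum_k y_k\leq C\sum_{k=1}^n\frac{1}{k+(n-k)/M}$, and rewriting the last sum as $\sum_{k=1}^n\frac{1}{n}\cdot\frac{1}{1-\frac{n-k}{n}(1-1/M)}$ exhibits it as a left Riemann sum of the increasing function $\frac{1}{1-x(1-1/M)}$ on $[0,1]$, hence it is at most $\int_0^1\frac{dx}{1-x(1-1/M)}=\frac{M\ln M}{M-1}$. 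This closes the gap and recovers the stated bound, keeping the rest of your framework (in particular the extension to revenue and convex combinations, which is fine as written).
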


In particular, if all job lengths are bounded above by $c$, then $R\leq S_j\leq cR$ for all $1\leq j\leq n$, and so $\max_{i,j}\frac{S_i}{S_j}\leq c$. The theorem then implies that the approximation ratio is at least $\frac{c-1}{c\ln c}$.

The proof follows a similar outline to that of Theorem \ref{THM:50PERCENTGENERAL}, but the details are more involved. It can be found in Appendix \ref{app:multserversmultlengths}.

\subsection{Multiple prices per server}
\label{subsec:multserversmultprice}

Assume as in Subsection \ref{subsec:multserversoneprice} that at each time step, server $j$ receives jobs of length $a_{j1}\leq a_{j2}\leq\dots\leq a_{jn_j}$ with probability $r_{j1},r_{j2},\dots,r_{jn_j}$, respectively. In this subsection, we consider setting an individual price not only for each server but also for each job length of that server. In particular, suppose that we set a price per time step $p_{ji}$ for jobs of length $a_{ji}$ on server $j$. Recall that the value per time step of a job is drawn from a distribution with cumulative distribution function $F$ and probability density function $f$, and that we assume that $\sum_{i=1}^{n_j}r_{ji}$ is constant. Let $S_j=a_{j1}r_{j1}+\dots+a_{jn_j}r_{jn_j}$.

We will compare a setting where we have considerable freedom with our pricing scheme and can set a different price $p_{ji}$ for each job length $a_{ji}$ on each server $j$ with a setting where we have limited freedom and must set the same price $p$ for all job lengths and all servers. We show that by ``composing'' our results on the two dimensions, we can obtain an approximation of the welfare and revenue of setting different prices by setting a single price.

\begin{theorem}
\label{thm:multserversmultlengthsgeneral}
For any prices $p_{ji}$, where $1\leq j\leq n$ and $1\leq i\leq n_j$ for each $j$, that we set in the multi-price setting, we can achieve a welfare (resp. revenue, or any convex combination of welfare and revenue) approximation of at least 
$$\frac{1}{2}\cdot\max\left(\frac{1}{H_n},\frac{M-1}{M\ln M}\right)$$
in the one-price setting, where $H_n=1+\frac{1}{2}+\dots+\frac{1}{n}\approx\ln n$ is the $n$th Harmonic number and $M=\max_{i,j}\frac{S_i}{S_j}$.
\end{theorem}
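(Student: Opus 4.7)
The plan is to prove Theorem~\ref{thm:multserversmultlengthsgeneral} by composing the per-server 50\% bound of Theorem~\ref{THM:50PERCENTGENERAL} with the cross-server bound of Theorem~\ref{THM:MULTSERVERSMULTLENGTHS}, exploiting the fact that total welfare (and revenue) decomposes additively across servers. In particular, let $c_w^{(j)}$ denote the single-server welfare functional from Lemma~\ref{LEM:FORMULAS} applied to the parameters of server $j$. By comparing the formulas displayed just before Theorem~\ref{THM:MULTSERVERSMULTLENGTHS} with the single-server formula (after dividing numerator and denominator of the latter by $S_j$), one sees directly that
\[
d_w(p_1,\dots,p_n) \;=\; \sum_{j=1}^n c_w^{(j)}(p_j) \quad\text{and}\quad d_w(\{p_{ji}\}) \;=\; \sum_{j=1}^n c_w^{(j)}(p_{j1},\dots,p_{jn_j}),
\]
and analogously for revenue and any convex combination of the two. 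This additive structure is what makes the composition possible.

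The first step is to apply Theorem~\ref{THM:50PERCENTGENERAL} on each server independently. For every $j$, that theorem produces an index $i_j$ such that, writing $p_j^{\star}:=p_{ji_j}$, we have $c_w^{(j)}(p_j^{\star})\ge \tfrac{1}{2}\, c_w^{(j)}(p_{j1},\dots,p_{jn_j})$. Summing over $j$ and using the additive decomposition yields
\[
d_w(p_1^{\star},\dots,p_n^{\star}) \;=\; \sum_{j=1}^n c_w^{(j)}(p_j^{\star}) \;\ge\; \tfrac{1}{2}\sum_{j=1}^n c_w^{(j)}(p_{j1},\dots,p_{jn_j}) \;=\; \tfrac{1}{2}\, d_w(\{p_{ji}\}).
\]
So the collection $(p_1^{\star},\dots,p_n^{\star})$ of one-price-per-server prices already captures at least half of the multi-price welfare.

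The second step is to feed this tuple into Theorem~\ref{THM:MULTSERVERSMULTLENGTHS}, which produces a single price $p^{\star}\in\{p_1^{\star},\dots,p_n^{\star}\}\subseteq\{p_{ji}\}$ satisfying
\[
d_w(p^{\star}) \;\ge\; \max\!\left(\frac{1}{H_n},\frac{M-1}{M\ln M}\right) \cdot d_w(p_1^{\star},\dots,p_n^{\star}).
\]
Chaining the two inequalities gives the claimed $\tfrac{1}{2}\cdot\max(1/H_n,(M-1)/(M\ln M))$ approximation, and since $p^{\star}$ was chosen from the original multi-price set, the conclusion that the single price can be taken from the $p_{ji}$'s is preserved. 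The same argument applies verbatim to revenue, and to any convex combination of welfare and revenue, because both ingredient theorems are stated in that generality and a convex combination of additive functionals is itself additive across servers.

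I do not anticipate a substantive obstacle: once the additive decomposition of $d_w$ (and $d_r$) over servers is recorded, the composition is essentially bookkeeping. The only point that requires a moment of care is to verify that the single-price-per-server value $d_w(p_1^{\star},\dots,p_n^{\star})$ that appears in step two coincides with $\sum_j c_w^{(j)}(p_j^{\star})$ from step one, which is immediate from the formulas in Subsection~\ref{subsec:multserversoneprice}; and that the price $p^{\star}$ produced by Theorem~\ref{THM:MULTSERVERSMULTLENGTHS} is one of the $p_j^{\star}$'s and hence one of the original $p_{ji}$'s, which is exactly the ``draw the single price from the multi-price set'' conclusion in that theorem.
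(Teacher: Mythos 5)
Your argument is the same composition used in the paper's proof: apply Theorem~\ref{THM:50PERCENTGENERAL} server-by-server to get a one-price-per-server tuple within a factor of $1/2$, then apply Theorem~\ref{THM:MULTSERVERSMULTLENGTHS} to that tuple to collapse to a single price, losing the additional $\max(1/H_n,(M-1)/(M\ln M))$ factor. Your write-up is slightly more explicit about the additive decomposition of $d_w$ and $d_r$ across servers and about the final price being drawn from the original $p_{ji}$'s, but the route is identical.
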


\begin{proof}
Consider welfare. By Theorem \ref{THM:50PERCENTGENERAL}, for each server $j$ we can achieve an $\frac{1}{2}$-approximation of the welfare in the multi-price setting by setting a single price $p_j$ for all job lengths. On the other hand, using Theorem \ref{THM:MULTSERVERSMULTLENGTHS}, we can approximate the latter welfare by a factor of $\max\left(\frac{1}{H_n},\frac{M-1}{M\ln M}\right)$ by setting a single price $p$ for all servers. Therefore, setting a single price $p$ also yields a $\frac{1}{2}\cdot\max\left(\frac{1}{H_n},\frac{M-1}{M\ln M}\right)$-approximation of the original welfare.

The same argument holds for revenue and for any convex combination of welfare and revenue.
\end{proof}

If we have tighter approximations for either the ``different prices for different job lengths'' or the ``different prices for different servers'' dimension, for instance by knowing the values of some of the parameters, then the same composition argument yields a correspondingly tighter bound.

\section{Conclusion}

In this paper, we study how well simple pricing schemes that are oblivious to certain parameters can approximate optimal schemes with respect to welfare and revenue, and prove several results when the simple schemes are restricted to setting the same price for all servers or all job lengths. Our results provide an explanation of the efficacy of such schemes in practice, including the one shown in Figure \ref{fig:azure} for virtual machines on Microsoft Azure. Since simple schemes do not require agents to spend time and resources to determine their specific parameter values, our results also serve as an argument in favor of using these schemes in a range of applications. It is worth noting that as all of our results are of worst case nature, we can expect the guarantees on welfare and revenue to be significantly better than these pessimistic bounds in practical instances where the parameters are not adversarially tailored.

We believe that there is still much interesting work to be done in the study of simple pricing schemes for the cloud. We conclude our paper by listing some intriguing future directions.

\begin{itemize}
\item In many scheduling applications, a job can be scheduled online to any server that is not occupied at the time. Does a good welfare or revenue approximation hold in such a model?

\item Can our results be extended to models with more fluid job arrivals, for example one where several jobs can arrive at each time step?

\item Can we approximate welfare and revenue simultaneously? A trivial randomized approach would be to choose with equal probability whether to approximate welfare or revenue. According to Theorem \ref{THM:50PERCENTGENERAL}, this yields a $1/4$-approximation for both expected welfare and expected revenue of the single-price setting in comparison to the multi-price setting for job lengths.
\end{itemize}

\section*{Acknowledgments}

Preliminary versions of this paper appeared in Proceedings of the 13th Conference on Web and Internet Economics, December 2017, and Proceedings of the 12th Workshop on the Economics of Networks, Systems and Computation, June 2017. We thank the anonymous reviewers for helpful comments. Warut Suksompong is partially supported by a Stanford Graduate Fellowship.

\bibliographystyle{ACM-Reference-Format}
\bibliography{main}

\appendix

\section{Proof of Lemma \ref{LEM:FORMULAS}}
\label{app:prooflemmaformulas}

We represent the states of the server by a Markov chain. Initially, we have an idle state corresponding to when the server is free. For each job length $a_i$, we create $a_i-1$ states that the server goes through when it accepts a job of length $a_i$. The states represent the number of time steps remaining to complete the service of the job before the server returns to the idle state. The rewards, which can be either welfare or revenue, are collected at each transition. The structure of the Markov chain makes solving for the stationary distribution straightforward.  This distribution gives the average proportion of time spent in each transition and the average welfare or revenue can be written in terms of these proportions. If the threshold for accepting a job of a certain length is $p$, then the revenue collected for each transition with that job length is always $p$, while the expected welfare gained during the transition is given by $\frac{\int_{x\geq p}\ell d\mu}{1-F(p)}$. This allows us to write down the formulas for $c_w$ and $c_r$.

Next, we present a simpler approach that is perhaps less formal. Consider an arbitrary time step when the server is free. If the job drawn at that time step has length $a_1$, then with probability $F(p_1)$ it has value below $p_1$ and is rejected, and with probability $1-F(p_1)$ it has value at least $p_1$ and is accepted. In the latter case, the job yields expected welfare $\frac{\int_{x\ge p_1}\ell d\mu}{1-F(p_1)}$ per time step over $a_1$ time steps. Analogous statements hold if the job has length $a_i$ for $2\leq i\leq n$. It follows that
\begin{equation*}
\begin{split}
0&=r_1\left(-F(p_1)c_w+(1-F(p_1))\left(a_1\cdot\frac{\int_{x\geq p_1}\ell d\mu}{1-F(p_1)}-a_1c_w\right)\right) \\ 
&\quad +r_2\left(-F(p_2)c_w+(1-F(p_2))\left(a_2\cdot\frac{\int_{x\geq p_2}\ell d\mu}{1-F(p_2)}-a_2c_w\right)\right) \\
&\quad +\dots \\
&\quad +r_n\left(-F(p_n)c_w+(1-F(p_n))\left(a_n\cdot\frac{\int_{x\geq p_n}\ell d\mu}{1-F(p_n)}-a_nc_w\right)\right) \\
&\quad +(1-r_1-r_2-\dots-r_n)(-c_w).
\end{split}
\end{equation*}
Solving for $c_w$, we have
$$c_w(p_1,p_2,\dots,p_n)=\frac{a_1r_1\int_{x\geq p_1}\ell d\mu+\dots+a_nr_n\int_{x\geq p_n}\ell d\mu}{S-((a_1-1)r_1F(p_1)+\dots+(a_n-1)r_nF(p_n))+(1-R)}.$$

For revenue, we can derive the equations in the same way, with the exception that the revenue from accepting a job at price $p$ is simply $p$. We have
\begin{equation*}
\begin{split}
0&=r_1\left(-F(p_1)c_r+(1-F(p_1))\left(a_1p_1-a_1c_r\right)\right) \\
&\quad +r_2\left(-F(p_2)c_r+(1-F(p_2))\left(a_2p_2-a_2c_r\right)\right)\\
&\quad +\dots\\
&\quad +r_n\left(-F(p_n)c_r+(1-F(p_n))\left(a_np_n-a_nc_r\right)\right)\\
&\quad +(1-r_1-r_2-\dots-r_n)(-c_r).
\end{split}
\end{equation*}
Solving for $c_r$, we get
$$c_r(p_1,p_2\dots,p_n)=\frac{a_1r_1(1-F(p_1))p_1+\dots+a_nr_n(1-F(p_n))p_n}{S-((a_1-1)r_1F(p_1)+\dots+(a_n-1)r_nF(p_n))+(1-R)}.$$

\section{Proof of Theorem \ref{THM:50PERCENTGENERAL}}
\label{app:proofapproxgeneral}

We first consider welfare. We wish to show that setting one of the prices $p_i$ alone achieves an approximation of $1/2$ of setting all $n$ prices. That is,
$$\max(c_w(p_1),\dots,c_w(p_n))\geq\frac{1}{2} \cdot c_w(p_1,\dots,p_n).$$
To establish this inequality, we will work with the ratio
$$\frac{\max(c_w(p_1),\dots,c_w(p_n))}{c_w(p_1,\dots,p_n)}$$
and show that its minimum is at least $1/2$.

Writing $A_i=\frac{\int_{x\geq p_i}\ell d\mu}{\int_{x\geq p_1}\ell d\mu}$ (in particular, $A_1=1$) and $B_i=F(p_i)$ for $1\leq i\leq n$, the ratio to minimize becomes
\begin{multline*}
g(A_1,\dots,A_n,B_1,\dots,B_n):=\max_{i=1}^n\left(\frac{SA_i}{S-(S-R)B_i+(1-R)}\right) \\
\cdot\frac{S-((a_1-1)r_1B_1+\dots+(a_n-1)r_nB_n)+(1-R)}{a_1r_1A_1+\dots+a_nr_nA_n}.
\end{multline*}

\emph{Case 1}: The max function outputs the first term, $\frac{SA_1}{S-(S-R)B_1+(1-R)}$.

Taking into account that $A_1=1$, we want to minimize the ratio
$$\frac{S}{S-(S-R)B_1+(1-R)}\cdot\frac{S-((a_1-1)r_1B_1+\dots+(a_n-1)r_nB_n)+(1-R)}{a_1r_1+a_2r_2A_2+\dots+a_nr_nA_n},$$
where $A_i\leq\frac{S-(S-R)B_i+(1-R)}{S-(S-R)B_1+(1-R)}$ for $i\geq 2$.

For any $A_i$, if we fix the remaining $A_j$ and all $B_j$, this is a decreasing function in $A_i$. To minimize it, we should set $A_i=\frac{S-(S-R)B_i+(1-R)}{S-(S-R)B_1+(1-R)}$ for all $i\geq 2$. The ratio becomes
$$\frac{S^2-((a_1-1)r_1B_1+\dots+(a_n-1)r_nB_n)S+S(1-R)}{S^2-(a_1r_1B_1+\dots+a_nr_nB_n)(S-R)+S(1-R)}.$$

\emph{Case 2}: The max function outputs the $i$th term, $\frac{SA_i}{S-(S-R)B_i+(1-R)}$, for some $i\geq 2$. This means that $A_j\leq\frac{S-(S-R)B_j+(1-R)}{S-(S-R)B_i+(1-R)}\cdot A_i$ for all $j\geq 2$ with $j\neq i$, and $A_i\geq\frac{S-(S-R)B_i+(1-R)}{S-(S-R)B_1+(1-R)}$. We want to minimize the ratio
$$\frac{SA_i}{S-(S-R)B_i+(1-R)}\cdot\frac{S-((a_1-1)r_1B_1+\dots+(a_n-1)r_nB_n)+(1-R)}{a_1r_1+a_2r_2A_2+\dots+a_nr_nA_n},$$
or equivalently,
$$\frac{SA_i}{a_1r_1A_1+\dots+a_nr_nA_n}\cdot\frac{S-((a_1-1)r_1B_1+\dots+(a_n-1)r_nB_n)+(1-R)}{S-(S-R)B_i+(1-R)}.$$

For any $j\geq 2$ with $j\neq i$, if we fix all terms $A_k$ except $A_j$ and fix all $B_k$, then this function is decreasing in $A_j$. To minimize it, we should set $A_j=\frac{S-(S-R)B_j+(1-R)}{S-(S-R)B_i+(1-R)}\cdot A_i$. The resulting function is increasing in $A_i$ if we fix all $B_k$, so we should set $A_i=\frac{S-(S-R)B_i+(1-R)}{S-(S-R)B_1+(1-R)}$. We obtain the same ratio as in Case 1.

Hence in either case, we are left with minimizing the function
$$h(B_1,\dots,B_n):=\frac{S^2-((a_1-1)r_1B_1+\dots+(a_n-1)r_nB_n)S+S(1-R)}{S^2-(a_1r_1B_1+\dots+a_nr_nB_n)(S-R)+S(1-R)}.$$
In particular, we want to show that $h(B_1,\dots,B_n)\geq\frac{1}{2}$ for any choice of $B_1,\dots,B_n$. This is equivalent to
$$S^2+S(1-R)\geq(2S(a_1-1)-a_1(S-R))r_1B_1+\dots+(2S(a_n-1)-a_n(S-R))r_nB_n$$
or
$$S^2+S(1-R)\geq((a_1-2)S+a_1R)r_1B_1+\dots+((a_n-2)S+a_nR)r_nB_n.$$

We consider two cases.

\begin{enumerate}
\item $a_1\geq 2$ (and hence $a_2,\dots,a_n\geq 2$). All coefficients of $r_iB_i$ on the right-hand side are positive, so we only need to verify the inequality for $B_1=\dots=B_n=1$. We have 
\begin{align*}
h(B_1=1,\dots,B_n=1)&=\frac{S^2-((a_1-1)r_1+\dots+(a_n-1)r_n)S+S(1-R)}{S^2-(a_1r_1+\dots+a_nr_n)(S-R)+S(1-R)} \\
&= \frac{S^2-(S-R)S+S(1-R)}{S^2-S(S-R)+S(1-R)}=1>\frac{1}{2}.
\end{align*}
\item $a_1=1$.\footnote{The proof proceeds similarly if there are other $a_i$'s equal to 1.} All coefficients of $r_iB_i$ on the right-hand side except the first one are positive, so we only need to verify the inequality for $B_1=0$ and $B_2=\dots=B_n=1$. We have
\begin{align*}
h(B_1=0,B_2=1,\dots,B_n=1)&=\frac{S}{S+r_1(S-R)}> \frac{1}{2},
\end{align*}
where the last line follows from $r_1(S-R)<1\cdot S=S$.
\end{enumerate}

So the inequality holds in both cases, and the approximation ratio is at least $\frac{1}{2}$, as claimed.

Finally, we can obtain analogous results for revenue by essentially repeating the same argument but instead writing $A_i=\frac{(1-F(p_i))p_i}{(1-F(p_1))p_1}$ for $1\leq i\leq n$, and for any convex combination of welfare and revenue by writing $A_i$ as the appropriate convex combination of the two corresponding terms.

\section{Proof of Theorem \ref{THM:TWOLENGTHSMAIN}}
\label{app:twolengthsmain}

Using the proof of Theorem \ref{THM:50PERCENTGENERAL}, we are left with minimizing the function
$$h(B_1,B_2):=\frac{S^2-((a-1)r_1B_1+(b-1)r_2B_2)S+S(1-R)}{S^2-ar_1(S-R)B_1-br_2(S-R)B_2+S(1-R)}.$$

Note that the numerator and the denominator are positive for all $0\leq B_1,B_2\leq 1$. When $B_1=B_2=B$, both terms are equal to 
$$S^2-S(S-R)B+S(1-R),$$ 
and so $h(B_1,B_2)=1$. Moreover, using the assumption $a<b$, we find that 
\begin{itemize}
\item $h(B_1,B_2)<1$ when $B_1=0$ and $B_2>0$, and
\item $h(B_1,B_2)>1$ when $B_1>0$ and $B_2=0$.
\end{itemize}
Hence the function $h(B_1,B_2)$ is increasing in $B_1$ for fixed $B_2$, and decreasing in $B_2$ for fixed $B_1$.\footnote{\label{footnote:zeroorone}To see this, observe that an arbitrary function of the form $f(x)=\frac{p+qx}{r+sx}$ for real constants $p,q,r,s$ is either increasing, decreasing, or constant in any interval where the denominator is nonzero, depending on whether $qr-ps$ is positive, negative, or zero, respectively.} This implies that the function is minimized when $B_1=0$ and $B_2=1$, where its value is
\begin{align*}
h(B_1=0,B_2=1)&=\frac{S^2-(b-1)r_2S+S(1-R)}{S^2-br_2(S-R)+S(1-R)} \\
&=\frac{(ar_1+br_2)(ar_1+1-r_1)}{a(a-1)r_1^2+a(b-1)r_1r_2+ar_1+br_2} \\
&= \rho(a,b,r_1,r_2),
\end{align*}
as claimed.

Next, we show that the approximation ratio is tight even if we are allowed to set an arbitrary price (i.e., not necessarily $p_1$ or $p_2$) in the one-price setting. To this end, consider a discrete bimodal distribution where a high probability $q_1\approx 1$ is put on a value $v_1\approx 0$ and a small probability $q_2\approx 0$ is put on a value $v_2\approx 1$.\footnote{Intuitively, we want to accept all short jobs but only high-valued long jobs. We can do this in the two-price setting, while in the one-price setting we are forced to either accept low-valued long jobs or reject low-valued short jobs.} The values $v_1,v_2$ and the probabilities are chosen arbitrarily close to 0 and 1 and so that the relation 
$$\frac{q_2v_2}{q_1v_1}=\frac{1}{S-R}$$
is satisfied.

In the two-price setting, we can set prices $p_1=v_1$ and $p_2=v_2$ and obtain welfare
\begin{align*}
c_w(p_1=v_1,p_2=v_2)&=\frac{ar_1(q_1v_1+q_2v_2)+br_2(q_2v_2)}{S-((a-1)r_1F(v_1)+(b-1)r_2F(v_2))+(1-R)} \\
&\approx\frac{ar_1(q_1v_1+q_2v_2)+br_2(q_2v_2)}{S-(b-1)r_2+(1-R)} \\
&=\frac{q_2v_2(ar_1(S+1-R)+br_2)}{S-(b-1)r_2+(1-R)}.
\end{align*}

On the other hand, in the one-price setting there are three price ranges that we can pick: $[0,v_1]$, $(v_1,v_2]$, and $(v_2,\infty)$. If we set a price in the range $(v_2,\infty)$, no job is accepted and the welfare is zero. For each of the other two ranges, setting any price in the range yields the same set of accepted jobs and thus the same welfare. Hence it suffices to consider setting prices $v_1$ and $v_2$. We have
$$c_w(p=v_1)=\frac{S(q_1v_1+q_2v_2)}{S+1-R}=Sq_2v_2$$
and
$$c_w(p=v_2)\approx\frac{Sq_2v_2}{S-(S-R)+(1-R)}=Sq_2v_2.$$

We obtain the same welfare per time step in either case. It follows that the approximation ratio is at most
\begin{align*}
\frac{c_w(p=v_1)}{c_w(p_1=v_1,p_2=v_2)} &= \frac{Sq_2v_2(S-(b-1)r_2+(1-R))}{q_2v_2(ar_1(S+1-R)+br_2)} \\
&= \frac{S(ar_1+1-r_1)}{a(a-1)r_1^2+a(b-1)r_1r_2+ar_1+br_2} \\
&= \rho(a,b,r_1,r_2).
\end{align*}

Finally, we can obtain analogous results for revenue by essentially repeating the same argument but instead writing $A_2=\frac{(1-F(p_2))p_2}{(1-F(p_1))p_1}$ in the proof of Theorem \ref{THM:50PERCENTGENERAL}, and for any convex combination of welfare and revenue by writing $A_2$ as the appropriate convex combination of the two corresponding terms.

\section{Proof of Theorem \ref{THM:OFFLINEOPT}}
\label{sec:offlineopt}

We prove the result for a distribution with discrete support and extend it to continuous and mixed support later.  Our approach follows that of \citet{FeldmanGrLu15}.
As job length and value are not independent, assume that there are jobs of classes $1,2,\dots,n$ and that a job of class $j$ arrives with probability $r_j$, has length $a_j$ and value per timestep $v_j$. Note that $a_j$ and $v_j$ need not be different for different classes of jobs.

We can write an ``expected LP'' to upper bound the maximum welfare per timestep as follows:
\begin{align*}
Opt &= \max \sum_j x_jv_ja_j \mbox{ s.t.}\\
x_j &\leq r_j\\
\sum_j x_ja_j &\leq 1
\end{align*}

Here, $Opt$ is the optimal long-run average welfare per time step, and the ratio $x_j/r_j$ can be thought of as the probability that an arriving job of class $j$ is accepted. The constraints then say that this probability is at most 1 and that we cannot accept more jobs in expectation than, if they arrived perfectly, would saturate the machine.

Take $p = Opt/2$,and consider some time $t$.  Let $y_t$ be the probability that the server is occupied at time $t$ (possibly by the job arriving at time $t$).  Then the expected revenue at time $t$ is $py_t = (Opt/2)\cdot y_t$.  The expected consumer surplus of the job arriving at time $t$ is at least $\sum_j r_j(v_j-p)a_j(1-y_t)$, because the probability that the server is unoccupied at time $t$ (either by the job arriving at time $t$ or by an earlier job) is a lower bound on the probability that the server was not occupied when the job at time $t$ arrived.  By the constraints of the LP, the consumer surplus is at least $(Opt/2)\cdot(1-y_t)$. Welfare is the sum of revenue and consumer surplus, so summing over all $t$ shows that the welfare per time step is at least $Opt/2$.

If the support of the distribution is not discrete, a similar argument still works. To determine $Opt$, we order the jobs in decreasing order of value per timestep and take the jobs until the machine is saturated. The rest of the proof then follows in the same way as before, with sums replaced by integrals.

As a further extension, this argument can also be adapted to the multiple server case.  We can solve the expected LP for each server individually and compute a price $p_i = Opt_i/2$.  We can then select whichever of these prices maximizes welfare to recover the $1/H_n$ bound from Theorem~\ref{thm:multserversmultlengthsgeneral}. (For some $i$, the welfare from the $i$ servers with the largest values of $Opt_i$ at price $p_i$, which we can lower bound as $i Opt_i/2$, must be at least $\frac{1}{2H_n} \sum_i Opt_i$.)

\section{Multiple Servers, One Job Length}
\label{sec:multserversonelength}

Assume that at each time step, either zero or one job appears for each server. Server $j$ receives a job of length $a$ with probability $r_j$. Suppose that we set a price per time step $p_j$ for jobs on server $j$. Recall that the value per time step of a job is drawn from a distribution with cumulative distribution function $F$ and probability density function $f$.

Using the formula (\ref{eqn:cw}) for $c_w$ given in Lemma \ref{LEM:FORMULAS}, we find that the welfare per time step is
\begin{align*}
e_w(p_1,p_2,\dots,p_n)&=\sum_{j=1}^n\frac{ar_j\int_{x\geq p_j}\ell d\mu}{ar_j-(a-1)r_jF(p_j)+(1-r_j)}\\
&=\sum_{j=1}^n\frac{a\int_{x\geq p_j}\ell d\mu}{(a-1)(1-F(p_j))+\frac{1}{r_j}}.
\end{align*}

If we set the same price $p=p_1=\dots=p_n$ for different servers, our welfare per time step becomes
$$e_w(p)=\sum_{j=1}^n\frac{a\int_{x\geq p}\ell d\mu}{\frac{1}{r_j}+(a-1)(1-F(p))}.$$

Similarly, we have the formulas for revenue per time step
\begin{align*}
e_r(p_1,p_2,\dots,p_n)
&=\sum_{j=1}^n\frac{a(1-F(p_j))p_j}{(a-1)(1-F(p_j))+\frac{1}{r_j}}
\end{align*}
and
$$e_r(p)=\sum_{j=1}^n\frac{a(1-F(p))p}{(a-1)(1-F(p))+\frac{1}{r_j}}.$$

We will compare the welfare and revenue that can be obtained by setting a single price against setting several prices. Similarly to Section \ref{sec:multservers}, we will show that if at least one dimension of the parameters is not too extreme, e.g., the number of servers, the job length, or the probabilities of jobs occurrence are bounded, then we can obtain a reasonable approximation of the welfare and revenue in the multi-price setting by setting just one price.

\begin{theorem}
\label{thm:multserversonelength}
For any prices $p_1,p_2,\dots,p_n$ that we set in the multi-price setting, we can achieve a welfare (resp. revenue, or any convex combination of welfare and revenue) approximation of at least 
$$\max\left(\frac{1}{H_n},\frac{M-1}{M\ln M},\frac{1}{a}\right)$$
in the one-price setting, where $H_n=1+\frac{1}{2}+\dots+\frac{1}{n}\approx\ln n$ is the $n$th Harmonic number and $M=\max_{i,j}\frac{r_i}{r_j}$.
\end{theorem}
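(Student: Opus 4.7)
The plan is to adapt the argument of Theorem~\ref{THM:MULTSERVERSMULTLENGTHS} to the present setting, where the constant-$R$ assumption is dropped but every server shares a common job length $a$. Denote the per-server contributions of the multi-price welfare by $W_j:=\frac{a\int_{x\geq p_j}\ell\, d\mu}{1/r_j+(a-1)(1-F(p_j))}$. Since the $j$-th summand of $e_w(p_j)$ is exactly $W_j$, we immediately have $e_w(p_j)\geq W_j$, which already yields the trivial bound $\max_j e_w(p_j)\geq \tfrac{1}{n}e_w(p_1,\dots,p_n)$. The remainder of the proof will strengthen this to the three claimed ratios.

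The new $\tfrac{1}{a}$ bound follows from two elementary pointwise denominator estimates. Using $r_k\leq 1$ to write $\tfrac{1}{r_k}+(a-1)(1-F(p))\leq \tfrac{1}{r_k}+(a-1)\leq \tfrac{a}{r_k}$, we obtain $e_w(p)\geq \bigl(\sum_k r_k\bigr)\int_{x\geq p}\ell\, d\mu$ for any single price $p$; dropping the $(a-1)(1-F(p_j))$ term from each multi-price denominator gives $e_w(p_1,\dots,p_n)\leq a\sum_j r_j\int_{x\geq p_j}\ell\, d\mu$. Taking $p:=p_{j^*}$, where $j^*$ maximizes $\int_{x\geq p_j}\ell\, d\mu$ over $j$, simplifies the upper bound to $a\bigl(\sum_j r_j\bigr)\int_{x\geq p_{j^*}}\ell\, d\mu$, and dividing the two bounds yields the ratio $\tfrac{1}{a}$.

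The $\tfrac{1}{H_n}$ and $\tfrac{M-1}{M\ln M}$ bounds would follow by replaying the argument of Appendix~\ref{app:multserversmultlengths} with the substitutions $A_i:=\frac{\int_{x\geq p_i}\ell\, d\mu}{\int_{x\geq p_1}\ell\, d\mu}$ and $B_i:=F(p_i)$; the roles formerly played by $S_j$ and $R$ are now played by $1/r_j+(a-1)$ and $a-1$ respectively, so that ratios of denominators across servers are controlled by $M=\max_{i,j}r_i/r_j$. The main obstacle will be the optimization step: after splitting on which term realizes $\max_i$ in $e_w(p_i)$ and eliminating the $A_i$'s by the same monotonicity argument as before, one must bound the remaining ratio for $B_i\in[0,1]$ carefully enough to extract both the harmonic-sum lower bound (by pigeonholing the $n$ largest per-server contributions, with the $k$-th largest contributing at least a $\tfrac{1}{k}$ share) and the $\tfrac{M-1}{M\ln M}$ bound (from estimating a sum of reciprocals of the denominators by the integral $\int_1^M \tfrac{dx}{x}=\ln M$).

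Finally, the revenue case follows by substituting $(1-F(p_i))p_i$ for $\int_{x\geq p_i}\ell\, d\mu$ in all three arguments and, in the $\tfrac{1}{a}$ step, instead choosing $j^*$ to maximize $(1-F(p_j))p_j$; any convex combination of welfare and revenue then follows by taking the corresponding combination term-by-term in the $A_i$'s, exactly as in the proofs of Theorems~\ref{THM:50PERCENTGENERAL} and~\ref{THM:TWOLENGTHSMAIN}.
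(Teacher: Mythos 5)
Your proposal is correct, and the most interesting part is the $\tfrac{1}{a}$ bound, where you depart from the paper's approach in a genuinely cleaner way. The paper derives $\tfrac{1}{a}$ inside its $h_0$ machinery (Case 1/Case 2 elimination of the $A_i$'s, then the pointwise inequality $\frac{R_i+(a-1)(1-B_i)}{R_j+(a-1)(1-B_i)}\geq\frac{1}{a}\cdot\frac{R_i}{R_j}$ followed by a telescoping of $\sum_j \frac{1}{R_j\sum_i 1/R_i}=1$). Your argument bypasses all of that: bound each single-price denominator above by $a/r_k$ (via $r_k\leq 1$ and $1-F(p)\leq 1$) to get $e_w(p)\geq(\sum_k r_k)\int_{x\geq p}\ell\,d\mu$, bound each multi-price denominator below by $1/r_j$ to get $e_w(p_1,\dots,p_n)\leq a\sum_j r_j\int_{x\geq p_j}\ell\,d\mu$, and close the gap by choosing $j^*$ to maximize the integrand. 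This is a direct two-line comparison that does not invoke the Case analysis at all, and it even certifies which price achieves the guarantee (the one with the largest retained welfare mass), whereas the paper's version only shows the maximum over $j$ works. The same clean calculation transfers verbatim to revenue with $(1-F(p_j))p_j$ in place of the integral, as you note.

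For the $\tfrac{1}{H_n}$ and $\tfrac{M-1}{M\ln M}$ bounds, your outline matches the paper's proof: split on the maximizing index, eliminate the $A_i$'s by monotonicity, then bound $h_0=\sum_j h_j$ by pigeonholing ($h_j\leq\frac{1}{n+1-j}$ after sorting the $R_j=1/r_j$) and by the left Riemann-sum estimate against $\int_0^1\frac{dx}{1-x(1-1/M)}=\frac{M\ln M}{M-1}$. One small caveat: the substitution you describe (``$S_j$ and $R$ become $1/r_j+(a-1)$ and $a-1$'') is not a literal change of variables that turns the denominator $1-(1-Rs_j)B_j+(1-R)s_j$ into $1/r_j+(a-1)(1-B_j)$; the two proofs share structure but not a common parametrization. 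What actually carries over is the key monotonicity fact that $\frac{R_j+(a-1)(1-B_j)}{R_i+(a-1)(1-B_j)}$ lies between $R_j/R_i$ and $1$ for $B_j\in[0,1]$, hence is at least $1/M$; once you have that, the Riemann-sum step is identical. So the conclusion you draw from the analogy is right even though the analogy itself is loose.
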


In particular, if all probabilities of job occurrence are bounded below by $c\leq 1$, then $\max_{i,j}\frac{S_i}{S_j}\leq \frac{1}{c}$. The theorem then implies that the approximation ratio is also at least $\frac{\frac{1}{c}-1}{\frac{1}{c}\ln\left(\frac{1}{c}\right)}=\frac{c-1}{\ln c}$.

\begin{proof}
We first consider welfare. We will work with the ratio
$$\frac{\max(e_w(p_1),\dots,e_w(p_n))}{e_w(p_1,\dots,p_n)}$$ 
and try to minimize it.

Writing $A_j=\frac{\int_{x\geq p_j}\ell d\mu}{\int_{x\geq p_1}\ell d\mu}$ (in particular, $A_1=1$), $B_j=F(p_j)$, and $R_j=1/r_j$ for $1\leq j\leq n$, the ratio to minimize becomes
\begin{equation*}
g(A_1,\dots,A_n,B_1,\dots,B_n):=
\max_{j=1}^n\left(\sum_{i=1}^n\frac{A_j}{R_i+(a-1)(1-B_j)}\right) 
\cdot\frac{1}{\sum_{j=1}^n\frac{A_i}{R_i+(a-1)(1-B_i)}}.
\end{equation*}

Let $U_j=\sum_{i=1}^n\frac{1}{R_i+(a-1)(1-B_j)}$ for $1\leq j\leq n$, and $T=\sum_{i=1}^n\frac{A_i}{R_i+(a-1)(1-B_i)}$. We have
$$g(A_1,\dots,A_n,B_1,\dots,B_n)=\frac{\max_{j=1}^n(A_jU_j)}{T}.$$

\emph{Case 1}: The max function outputs the first term, $A_1U_1$. 

Taking into account that $A_1=1$, we want to minimize the ratio
$$\frac{U_1}{T}=\left(\sum_{i=1}^n\frac{1}{R_i+(a-1)(1-B_1)}\right) 
\cdot\frac{1}{\sum_{i=1}^n\frac{A_i}{R_i+(a-1)(1-B_i)}},$$
where $A_i\leq\frac{U_1}{U_i}$ for all $i\geq 2$.

For any $A_j$, if we fix the remaining $A_i$ and all $B_i$, this is a decreasing function in $A_j$. To minimize it, we should set $A_i=\frac{U_1}{U_i}$ for all $i\geq 2$. The ratio becomes
$$\frac{1}{\sum_{j=1}^n\frac{1}{\sum_{i=1}^n\frac{R_j+(a-1)(1-B_j)}{R_i+(a-1)(1-B_j)}}}.$$

\emph{Case 2}: The max function outputs the $j$th term, $A_jU_j$, for some $j\geq 2$.

This means that $A_i\leq\frac{A_jU_j}{U_i}$ for all $i\geq 2$ with $i\neq j$, and $A_j\geq\frac{U_1}{U_j}$. We want to minimize the ratio
$$\frac{A_jU_j}{T}=\left(\sum_{i=1}^n\frac{A_j}{R_i+(a-1)(1-B_j)}\right) 
\cdot\frac{1}{\sum_{i=1}^n\frac{A_i}{R_i+(a-1)(1-B_i)}}.$$

For any $i\geq 2$ with $i\neq j$, if we fix all terms $A_k$ except $A_i$ and fix all $B_k$, then this function is decreasing in $A_i$. To minimize it, we should set $A_i=\frac{A_jU_j}{U_i}$. The resulting function is increasing in $A_j$ if we fix all $B_k$, so we should set $A_j=\frac{U_1}{U_j}$. We obtain the same ratio as in Case 1.

Hence in either case, we are left with minimizing the function
$$h(B_1,\dots,B_n):=\frac{1}{\sum_{j=1}^n\frac{1}{\sum_{i=1}^n\frac{R_j+(a-1)(1-B_j)}{R_i+(a-1)(1-B_j)}}}.$$
In other words, the reciprocal of this function, which we want to maximize, is
\begin{align*}
h_0(B_1,\dots,B_n)&:=\sum_{j=1}^n\frac{1}{\sum_{i=1}^n\frac{R_j+(a-1)(1-B_j)}{R_i+(a-1)(1-B_j)}}.
\end{align*}

Assume without loss of generality that $R_1\geq R_2\geq\ldots\geq R_n$. For $1\leq j\leq n$, write $h_j=\frac{1}{\sum_{i=1}^n\frac{R_j+(a-1)(1-B_j)}{R_i+(a-1)(1-B_j)}}$, i.e., the $j$th term of the sum that constitutes $h_0$. The last $n+1-j$ terms of the sum in the denominator of $h_j$ are at least 1 since $R_j\geq R_i$ for $i\geq j$. This implies that $h_j\leq\frac{1}{n+1-j}$, and therefore 
\begin{align*}
h_0(B_1,\dots,B_n)&=h_1+h_2+\dots+h_n\\
&\leq \frac{1}{n}+\frac{1}{n-1}+\dots+1=H_n.
\end{align*}
Equivalently, $h(B_1,\dots,B_n)\geq\frac{1}{H_n}$.

Next, since $M=\max_{i,j}\frac{r_i}{r_j}$, we have $\frac{R_i}{R_j}\leq M$ for all $i,j$. Note that
$$\frac{R_i+(a-1)(1-B_i)}{R_j+(a-1)(1-B_i)}\geq \frac{1}{M}$$
for all $i,j$. This implies that
$$h_j\leq\frac{1}{n+1-j+\frac{j-1}{M}}$$
for all $1\leq j\leq n$, and therefore
\begin{align*}
h_0(B_1,\dots,B_n)&=\sum_{j=1}^n\frac{1}{n-(j-1)\left(1-\frac{1}{M}\right)}\\
&=\sum_{j=1}^n\frac{1}{n}\cdot\frac{1}{1-\frac{j-1}{n}\left(1-\frac{1}{M}\right)}.
\end{align*}
The last sum is the left Riemann sum of the function $\frac{1}{1-x\left(1-\frac{1}{M}\right)}$. This function is increasing in $x$, so its integral between 0 and 1 is an upper bound for our sum. Hence we have
$$h_0(B_1,\dots,B_n)\leq\int_0^1\frac{1}{1-x\left(1-\frac{1}{M}\right)}dx=\frac{M\ln M}{M-1}.$$
Equivalently, $h(B_1,\dots,B_n)\geq\frac{M-1}{M\ln M}$.

Finally, one can check that since $R_i\geq 1$ and $1-B_i\leq 1$ for all $i$,
$$\frac{R_i+(a-1)(1-B_i)}{R_j+(a-1)(1-B_i)}\geq \frac{1}{a}\cdot\frac{R_i}{R_j}$$
for all $i,j$. It follows that
\begin{align*}
h_0(B_1,\dots,B_n)&\leq\sum_{i=1}^n\frac{1}{\sum_{j=1}^n\frac{1}{a}\cdot\frac{R_i}{R_j}}\\
&=a\sum_{i=1}^n\frac{1}{\sum_{j=1}^n\frac{R_i}{R_j}}\\
&=a.
\end{align*}
Equivalently, $h(B_1,\dots,B_n)\geq\frac{1}{a}$.

Combining the three bounds, we have 
$$h(B_1,\dots,B_n)\geq\max\left(\frac{1}{H_n},\frac{M-1}{M\ln M},\frac{1}{a}\right),$$
as desired.

Finally, we can obtain analogous results for revenue by essentially repeating the same argument but instead writing $A_j=\frac{(1-F(p_j))p_j}{(1-F(p_1))p_1}$ for $1\leq j\leq n$, and for any convex combination of welfare and revenue by writing $A_j$ as the appropriate convex combination of the two corresponding terms.
\end{proof}

We now address the tightness of the approximation ratio. The upper bound $H_n$ for $h_0$ is the best possible in the sense that there exist values $B_1,\dots,B_n,R_1,\dots,R_n,a$ such that $h_0$ gets arbitrarily close to $H_n$. In particular, take $R_j=c^{2(n-j)}$ and $B_j=1-\frac{c^{2(n-j)+1}}{a-1}$ for some large constant $c$. We have
\begin{align*}
h_0(B_1,\dots,B_n)&=\sum_{j=1}^n\frac{1}{\sum_{i=1}^n\frac{R_j+(a-1)(1-B_j)}{R_i+(a-1)(1-B_j)}}\\
&=\sum_{j=1}^n\frac{1}{\sum_{i=1}^n\frac{c^{2(n-j)}+c^{2(n-j)+1}}{c^{2(n-i)}+c^{2(n-j)+1}}}\\
&=\sum_{j=1}^n\frac{1}{\sum_{i=1}^n\frac{1+c}{c^{2(j-i)}+c}}.
\end{align*}
Taking $c\rightarrow\infty$, we find that or $j\leq i$, the fraction $\frac{1+c}{c^{2(j-i)}+c}$ converges to 1, while for $j>i$, it converges to 0. Hence $h_j\rightarrow\frac{1}{n+1-j}$, and consequently $h_0\rightarrow H_n$. 

While this argument does not directly imply the tightness of the approximation ratio, we see it as strong evidence for that claim.

\section{Proof of Theorem \ref{THM:MULTSERVERSMULTLENGTHS}}
\label{app:multserversmultlengths}

We first consider welfare. We will work with the ratio
$$\frac{\max(d_w(p_1),\dots,d_w(p_n))}{d_w(p_1,\dots,p_n)}$$ 
and try to minimize it.

Writing $A_j=\frac{\int_{x\geq p_j}\ell d\mu}{\int_{x\geq p_1}\ell d\mu}$ (in particular, $A_1=1$), $B_j=F(p_j)$, and $s_j=\frac{1}{S_j}$ for $1\leq j\leq n$, the ratio to minimize becomes
\begin{equation*}
g(A_1,\dots,A_n,B_1,\dots,B_n):=\\
\max_{j=1}^n\left(\sum_{i=1}^n\frac{A_j}{1-(1-Rs_i)B_j+(1-R)s_i}\right) 
\cdot\frac{1}{\sum_{i=1}^n\frac{A_i}{1-(1-Rs_i)B_i+(1-R)s_i}}.
\end{equation*}

Let $U_j=\sum_{i=1}^n\frac{1}{1-(1-Rs_i)B_j+(1-R)s_j}$ for $1\leq j\leq n$, and $T=\sum_{i=1}^n\frac{A_i}{1-(1-Rs_i)B_i+(1-R)s_i}$. We have
$$g(A_1,\dots,A_n,B_1,\dots,B_n)=\frac{\max_{j=1}^n(A_jU_j)}{T}.$$

\emph{Case 1}: The max function outputs the first term, $A_1U_1$. 

Taking into account that $A_1=1$, we want to minimize the ratio
$$\frac{U_1}{T}=\left(\sum_{i=1}^n\frac{1}{1-(1-Rs_i)B_1+(1-R)s_i}\right) 
\cdot\frac{1}{\sum_{i=1}^n\frac{A_i}{1-(1-Rs_i)B_i+(1-R)s_i}},$$
where $A_i\leq\frac{U_1}{U_i}$ for all $i\geq 2$.

For any $A_j$, if we fix the remaining $A_i$ and all $B_i$, this is a decreasing function in $A_j$. To minimize it, we should set $A_i=\frac{U_1}{U_i}$ for all $i\geq 2$. The ratio becomes
$$\frac{1}{\sum_{j=1}^n\frac{1}{\sum_{i=1}^n\frac{1-(1-Rs_j)B_j+(1-R)s_j}{1-(1-Rs_i)B_j+(1-R)s_i}}}.$$

\emph{Case 2}: The max function outputs the $j$th term, $A_jU_j$, for some $j\geq 2$.

This means that $A_i\leq\frac{A_jU_j}{U_i}$ for all $i\geq 2$ with $i\neq j$, and $A_j\geq\frac{U_1}{U_j}$. We want to minimize the ratio
$$\frac{A_jU_j}{T}=\left(\sum_{i=1}^n\frac{A_j}{1-(1-Rs_i)B_j+(1-R)s_i}\right) 
\cdot\frac{1}{\sum_{i=1}^n\frac{A_i}{1-(1-Rs_i)B_i+(1-R)s_i}}.$$

For any $i\geq 2$ with $i\neq j$, if we fix all terms $A_k$ except $A_i$ and fix all $B_k$, then this function is decreasing in $A_i$. To minimize it, we should set $A_i=\frac{A_jU_j}{U_i}$. The resulting function is increasing in $A_j$ if we fix all $B_k$, so we should set $A_j=\frac{U_1}{U_j}$. We obtain the same ratio as in Case 1.

Hence in either case, we are left with minimizing the function
$$h(B_1,\dots,B_n):=\frac{1}{\sum_{j=1}^n\frac{1}{\sum_{i=1}^n\frac{1-(1-Rs_j)B_j+(1-R)s_j}{1-(1-Rs_i)B_j+(1-R)s_i}}}.$$
In other words, the reciprocal of this function, which we want to maximize, is
\begin{align*}
h_0(B_1,\dots,B_n)&:=\sum_{j=1}^n\frac{1}{\sum_{i=1}^n\frac{1-(1-Rs_j)B_j+(1-R)s_j}{1-(1-Rs_i)B_j+(1-R)s_i}}\\
&=\sum_{j=1}^n\frac{1}{\sum_{i=1}^n\frac{1-B_j+s_j(1-R+RB_j)}{1-B_j+s_i(1-R+RB_j)}}.
\end{align*}

Assume without loss of generality that $s_1\geq s_2\geq\ldots\geq s_n$. For $1\leq j\leq n$, write $h_j=\frac{1}{\sum_{i=1}^n\frac{1-B_j+s_j(1-R+RB_j)}{1-B_j+s_i(1-R+RB_j)}}$, i.e., the $j$th term of the sum that constitutes $h_0$. The last $n+1-j$ terms of the sum in the denominator of $h_j$ are at least 1 since $s_j\geq s_i$ for $i\geq j$. This implies that $h_j\leq\frac{1}{n+1-j}$, and therefore 
\begin{align*}
h_0(B_1,\dots,B_n)&=h_1+h_2+\dots+h_n\\
&\leq \frac{1}{n}+\frac{1}{n-1}+\dots+1=H_n.
\end{align*}
Equivalently, $h(B_1,\dots,B_n)\geq\frac{1}{H_n}$.

Next, since $M=\max_{i,j}\frac{S_i}{S_j}$, we have $\frac{s_i}{s_j}\leq M$ for all $i,j$. Note that
$$\frac{1-(1-Rs_j)B_j+(1-R)s_j}{1-(1-Rs_i)B_j+(1-R)s_i}\geq \frac{1}{M}$$
for all $i,j$, since both the numerator and the denominator are positive, and when seen as a function of $B_j$, the expression is monotonic and takes on values at least $\frac{1}{M}$ at $B_j=0$ and $B_j=1$. This implies that
$$h_j\leq\frac{1}{n+1-j+\frac{j-1}{M}}$$
for all $1\leq j\leq n$, and therefore
\begin{align*}
h_0(B_1,\dots,B_n)&=\sum_{j=1}^n\frac{1}{n-(j-1)\left(1-\frac{1}{M}\right)}\\
&=\sum_{j=1}^n\frac{1}{n}\cdot\frac{1}{1-\frac{j-1}{n}\left(1-\frac{1}{M}\right)}.
\end{align*}
The last sum is the left Riemann sum of the function $\frac{1}{1-x\left(1-\frac{1}{M}\right)}$. This function is increasing in $x$, so its integral between 0 and 1 is an upper bound for our sum. Hence we have
$$h_0(B_1,\dots,B_n)\leq\int_0^1\frac{1}{1-x\left(1-\frac{1}{M}\right)}dx=\frac{M\ln M}{M-1}.$$
Equivalently, $h(B_1,\dots,B_n)\geq\frac{M-1}{M\ln M}$.

Combining the two bounds, we have 
$$h(B_1,\dots,B_n)\geq\max\left(\frac{1}{H_n},\frac{M-1}{M\ln M}\right),$$
as desired.

Finally, we can obtain analogous results for revenue by essentially repeating the same argument but instead writing $A_j=\frac{(1-F(p_j))p_j}{(1-F(p_1))p_1}$ for $1\leq j\leq n$, and for any convex combination of welfare and revenue by writing $A_j$ as the appropriate convex combination of the two corresponding terms.

Done with the proof of the theorem, we now address the tightness of the approximation ratio. The upper bound $H_n$ for $h_0$ is the best possible in the sense that there exist values $B_1,\dots,B_n,s_1,\dots,s_n,R$ such that $h_0$ gets arbitrarily close to $H_n$. In particular, take $s_j=c^{-2j}$ and $B_j=1-c^{-2j+1}$ for some large constant $c$, and $R=1$. We have
\begin{align*}
h_0(B_1,\dots,B_n)&=\sum_{j=1}^n\frac{1}{\sum_{i=1}^n\frac{1-B_j+B_js_j}{1-B_j+B_js_i}}\\
&=\sum_{j=1}^n\frac{1}{\sum_{i=1}^n\frac{c^{-2j+1}+c^{-2j}-c^{-4j+1}}{c^{-2j+1}+c^{-2i}-c^{-2j-2i+1}}}\\
&=\sum_{j=1}^n\frac{1}{\sum_{i=1}^n\frac{c+1-c^{-2j+1}}{c+c^{2(j-i)}-c^{-2i+1}}}.
\end{align*}
Taking $c\rightarrow\infty$, we find that for $j\leq i$, the fraction $\frac{c+1-c^{-2j+1}}{c+c^{2(j-i)}-c^{-2i+1}}$ converges to 1, while for $j>i$, it converges to 0. Hence $h_j\rightarrow\frac{1}{n+1-j}$, and consequently $h_0\rightarrow H_n$. 

While this argument does not directly imply the tightness of the approximation ratio, we see it as strong evidence for that claim.

	\end{document}